\title{Faster Algorithms for Markov Decision Processes with Low Treewidth}
\author{Krishnendu Chatterjee\inst{1} and Jakub Łącki\inst{2}}
\institute{
IST Austria (Institute of Science and Technology Austria)
    \and
Institute of Informatics, University of Warsaw, Poland 
}
\newcommand{\asr}{ASR }
\newcommand{\mec}{MEC }
\newcommand{\kexp}{2.38}
\begin{document}
\maketitle

\begin{abstract}
We consider two core algorithmic problems for probabilistic verification:
the maximal end-component decomposition and the almost-sure reachability 
set computation for Markov decision processes (MDPs). 
For MDPs with treewidth $k$, we present two improved 
static algorithms for both the problems that run in time 
$O(n \cdot k^{\kexp} \cdot 2^k)$ and $O(m \cdot \log n \cdot k)$,
respectively, where $n$ is the number of states and $m$ is the number of edges, 
significantly improving the previous known $O(n\cdot k \cdot \sqrt{n\cdot k})$ 
bound for low treewidth.
We also present decremental algorithms for both problems for MDPs with constant treewidth that run in amortized logarithmic time,
which is a huge improvement over the previously known algorithms
that require amortized linear time.  
\end{abstract}

\section{Introduction}
In this work we will present efficient static and decremental algorithms for 
two core graph algorithmic problems in probabilistic verification when the 
graph has low treewidth. 
We start with the basic description of the model, the problem, and its importance.

\smallskip\noindent{\bf Markov decision processes with parity objectives.} 
The standard model of systems in probabilistic verification that exhibit 
both probabilistic and nondeterministic behavior are \emph{Markov decision processes (MDPs)}~\cite{Howard}.
MDPs have been used for control problems for stochastic systems~\cite{FV97},
where nondeterminism represents the freedom of the controller to choose a 
control action, and the probabilistic component of the behavior describes the 
system response to control actions; as well as in many other applications~\cite{CY95,BdA95,PRISM}.
A \emph{specification} describes the set of good behaviors of the system.
In the verification and control of stochastic systems the specification is 
typically an $\omega$-regular set of paths.
The class of $\omega$-regular languages extends classical regular languages to 
infinite strings, and provides a robust specification language to express
all commonly used specifications, such as safety, liveness, fairness, etc.~\cite{Thomas97}. 
A canonical way to define such $\omega$-regular specifications are \emph{parity} objectives.
Hence MDPs with parity objectives provide the mathematical framework to 
study problems such as the verification and control of stochastic systems.

\smallskip\noindent{\bf The analysis problems.} 
There are two types of analysis for MDPs with parity objectives.
The \emph{qualitative analysis} problem given an MDP with a parity 
objective, asks for the computation of the set of states from
where the parity objective can be ensured with probability~1 
(almost-sure winning).
The more general \emph{quantitative analysis} asks for the computation 
of the maximal probability at each state with which the controller can 
satisfy the parity objective. 

\smallskip\noindent{\bf Significance of qualitative analysis.} 
The qualitative analysis of MDPs is an important problem in verification. 
In several applications the controller must ensure that the correct 
behavior arises with probability~1.
For example, in analysis of randomized embedded schedulers, the relevant 
questions is whether every thread progresses with probability~1~\cite{EMSOFT05}.
Moreover, even in applications where it is sufficient to satisfy the 
specification with probability $p<1$, the correct choice of $p$ is a 
challenging problem, due to the simplifications introduced during modeling; 
for example, for randomized distributed algorithms it is common to require 
correctness with probability~1 (see, e.g., \cite{PSL00,KNP_PRISM00,Sto02b}). 
Furthermore, in contrast to quantitative analysis, 
qualitative analysis is robust to numerical perturbations and 
precise transition probabilities, and consequently the algorithms for 
qualitative analysis are discrete and combinatorial.
Finally, the best known algorithms for quantitative analysis of 
MDPs with parity objectives first perform the qualitative analysis, and then 
a quantitative analysis on the result of the qualitative analysis~\cite{CY95,luca-thesis,CJH04}.

\smallskip\noindent{\bf Core algorithmic problems.} 
The qualitative analysis of MDPs with parity objectives relies on two graph
algorithmic problems: (1)~the maximal end-component decomposition; and 
(2)~the almost-sure reachability set computation. 
An end-component $C$ in an MDP is a set of states that is strongly connected 
and closed (no probabilistic transition from $C$ leaves $C$), and a maximal
end-component is an end-component which is maximal with respect to inclusion 
ordering. 
The maximal end-component (MEC) problem generalizes the scc (maximal strongly 
connected component) decomposition problem for directed graphs, and recurrent 
classes for Markov chains. 
The almost-sure reachability set for a set $U$ of target vertices is the set of
states such that it can be ensured that the set $U$ is reached with probability~1
(in other words, it is the qualitative analysis for reachability objectives).
The qualitative analysis problem for MDPs with parity objectives with $d$-priorities
can be solved with $\log d$ calls to the \mec decomposition problem and one call
to the almost-sure reachability problem~\cite{CH11}.
Thus the \mec decomposition and the almost-sure reachability set computation are
the core algorithmic problems required for the qualitative analysis of MDPs with
parity objectives.
In addition to qualitative analysis of MDPs with parity objectives, 
several algorithms for quantitative analysis of MDPs with 
quantitative objectives such as $\limsup$ and $\liminf$ 
objectives~\cite{CH-ILC}, combination of mean-payoff and 
parity objectives~\cite{CHJS10}, and multi-objective mean-payoff objectives~\cite{BBCFK11}, 
rely crucially on the \mec decomposition problem.

\smallskip\noindent{\bf Dynamic algorithms.}
In the design and analysis of probabilistic systems it is 
natural that the systems under verification are developed incrementally 
by adding choices or removing choices for player~1, whereas the probabilistic
choices which represent choice of nature or uncertainty remain unchanged. 
Hence there is a clear motivation to obtain dynamic algorithms for \mec 
decomposition and almost-sure reachability set for MDPs that achieve a 
better running time than recomputation from scratch when player-1 edges 
are inserted or deleted.

\smallskip\noindent{\bf Previous results.} 
The current best known algorithms for both the \mec decomposition and the 
almost-sure reachability set computation require 
$O(m \cdot \min(\sqrt{m},n^{2/3}))$ time~\cite{CH11,CH12},
where $n$ is the number of states and $m$ is the number of transitions 
(edges).
Using a well-known fact that graphs of treewidth $k$ have $O(n\cdot k)$ edges,
one can obtain $O(n\cdot k \cdot \sqrt{n\cdot k})$ algorithms for \mec decomposition and almost-sure reachability set computation (they follow directly from the general $O(m \cdot \sqrt{m})$-time algorithm). The best known incremental 
and decremental algorithms for both problems require amortized linear time
($O(n)$ time)~\cite{CH11}.

\smallskip\noindent{\bf Our contributions.} In this work we consider MDPs 
with low treewidth.
The concept of treewidth and tree decomposition of graphs was introduced
in~\cite{RS84}.
On one hand treewidth is a very relevant graph theoretic notion that 
measures how a graph can be decomposed into a tree, on the other hand,
most systems developed in practice have low treewidth.
For example, it has been shown that the control flow graphs of goto free Pascal 
programs have treewidth at most~3, and that the control flow graphs of goto 
free C programs have treewidth at most~6~\cite{Tho98}.
It was also shown in~\cite{Tho98}  that tree decompositions, which are
very costly to compute in general, can be generated in linear time with small
constants for these control flow graphs.
Our main results are efficient static and decremental algorithms for the
\mec decomposition and the almost-sure reachability set computation for MDPs 
with low treewidth.
Several benchmarks in PRISM are probabilistic programs written in programming 
languages mentioned above and consequently have small treewidth, and our results
are relevant for such MDPs.
The details of our contribution are as follows:
\begin{enumerate}
\item We present two improved static algorithms both for the \mec decomposition 
and the almost-sure reachability set computation for MDPs with treewidth $k$ 
that run in time $O(n \cdot k^{\kexp} \cdot 2^k)$ and $O(m \cdot \log n \cdot k)$,
respectively, where $n$ is the number of states and $m$ is the number of edges
(also note that for treewidth $k$ we have $m=O(n \cdot k)$).
For MDPs with low treewidth, our new linear-time algorithms are significant 
improvements over the previous known $O(n\cdot k \cdot \sqrt{n \cdot k})$ 
algorithms for both the problems.

\item We present decremental algorithms for the \mec decomposition and the
almost-sure reachability set computation for MDPs with treewidth $k$ that
require $O(k \cdot \log n)$ amortized time, which is a huge improvement
for constant treewidth over the previous algorithms that require $O(n)$ amortized time. 
\end{enumerate}
Our key technical contribution is as follows: for MDPs we establish a separation
property for the almost-sure reachability set that allows us to use tree
decomposition to obtain the  $O(n \cdot k^{\kexp} \cdot 2^k)$-time static algorithm. A 
similar intuition also works for the \mec decomposition problem.
We then view the \mec decomposition and the almost-sure reachability set computation
problems as decremental graph problems, and use dynamic graph algorithmic techniques
to obtain the $O(m \cdot \log n \cdot k)$-time static algorithms and the decremental
algorithms.
Note that when edges are inserted, the treewidth of the graph may increase and the tree decomposition can change. 
Thus, incremental algorithms with polylogarithmic amortized cost
remain an interesting open question (even for scc decomposition).

\smallskip\noindent{\bf Related works.} The notion of treewidth is studied in 
context of many graph theoretic algorithms, see~\cite{Bo97} for an excellent survey.
In verification, the problem of low and medium treewidth has been considered
for efficient algorithms for parity games: a polynomial time algorithm for
parity games with constant treewidth was presented in~\cite{Ob03}; a recent improved 
result for constant treewidth was presented in~\cite{FS12}; and the algorithmic 
problem of parity games with medium treewidth was considered in~\cite{FL11}.
Though the games problem has been studied with the treewidth restriction,
to the best of our knowledge, improved algorithms for MDPs have not been 
considered with the treewidth restriction.

\section{Preliminaries}
In this section we first present the basic graph theoretic 
definitions of the \mec decomposition and the almost-sure reachability
set computation, and then define the notions of treewidth.

\subsection{\mec decomposition and almost-sure reachability}

\noindent{\em Markov decision processes (MDPs).}
A \emph{Markov decision process (MDP)} $G=((V, E), (V_1,V_P),\delta)$ 
consists of a finite directed {\em MDP graph} $(V,E)$, a partition $(V_1,V_P)$ 
of the \emph{finite} set $V$ of vertices, and a probabilistic transition function 
$\delta$: $V_P \rightarrow {\mathcal D}(V)$, where ${\mathcal D}(V)$ denotes the 
set of probability distributions over the vertex set $V$, such that for all vertices
$u \in V_P$ and $v \in V$  we have $uv \in E$ iff $\delta(u)(v)>0$.
An edge $uv \in E$ is a \emph{player-1} edge if $u \in V_1$.
For the algorithmic problems we will consider, the probabilistic transition 
function will not be relevant and we will consider the MDP graph along with the 
partition.

\paragraph{Maximal end-component decomposition.}
For the maximal end-component decomposition, the input is a directed graph $G = (V,E)$ 
and a partition $(V_1, V_P)$ of its vertex set (i.e., the MDP graph and the partition).
An end-component $U$ is a set of vertices such that the subgraph induced by $U$ is strongly connected and 
for each edge $uv \in E$, if $u \in U \cap V_P$ then $v \in U$.
If $U_1$ and $U_2$ are two end-components and $U_1 \cap U_2 \neq \emptyset$, 
then $U_1 \cup U_2$ is also an end-component.
The maximal end-component (MEC) decomposition consists of all the maximal end-components of $V$ and all vertices of $V$ 
that do not belong to any MEC.

\paragraph{Almost-sure reachability.} 
For almost-sure reachability, the input is an MDP and a target set $U \subseteq V$ 
of vertices, and the goal is to compute the set $A$ of vertices,
such that player~1 can ensure that the set $U$ is reached with probability~1.
We first note that given the target set $U$, we can add a new vertex $s$ as the new
target vertex, and transform the set $U$ such that all out-edges from vertices in $U$ 
end up in $s$, and the vertex $s$ has only a self-loop.
Thus we will consider the case when the target set is a single vertex $s$.
We first reduce the computation of the almost-sure reachability set for 
a target vertex $s$ to the following problem.
The input is a directed graph $G = (V,E)$, a partition $(V_1, V_P)$ of its vertex set (the MDP graph and
the partition),  and a target vertex $s \in V$.
The goal is to compute a maximal (w.r.t inclusion) subset $Q \subseteq V$, such that
the following two conditions are satisfied:
\begin{itemize}
\item for every $q \in Q$, there exists a path from $q$ to $s$ consisting only of vertices in $Q$ (global condition), and
\item for every $uv \in E$, if $u \in Q \cap V_P$, then $v \in Q$ (local condition).
\end{itemize}
First observe that if $Q_1 \subseteq V$ and $Q_2 \subseteq V$ both satisfy the global 
and the local conditions, then so does $Q_1 \cup Q_2$.
It follows that there is a unique maximum set $A^* \subseteq V$ that satisfies
both the global and the local conditions.
The resulting set $A^*$ is the almost-sure reachability set (in the following also called an \emph{\asr set}). 
Let $A$ be the almost-sure reachability set and $A^*$ be the largest set that 
satisfies the two conditions (the global and the local conditions).
\begin{lemma}\label{lemm:equalA}
We have $A=A^*$.
\end{lemma}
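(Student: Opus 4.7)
The plan is to prove both inclusions $A \subseteq A^*$ and $A^* \subseteq A$.

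For the direction $A \subseteq A^*$, I would argue that the set $A$ itself satisfies the global and local conditions; then, since $A^*$ is the maximum such set, $A \subseteq A^*$. For the local condition, suppose $u \in A \cap V_P$ and $uv \in E$. Since $u$ is probabilistic, $\delta(u)(v) > 0$, so from $u$ every strategy reaches $v$ with positive probability in one step. If $v \notin A$, then no strategy from $v$ can ensure reaching $s$ with probability $1$, which would contradict $u \in A$; hence $v \in A$. For the global condition, pick any $q \in A$ and any almost-sure winning strategy $\sigma$ from $q$. Under $\sigma$, the set $B$ of vertices reachable with positive probability is entirely contained in $A$ (by the same argument as above, iterated), and from every vertex of $B$ there must exist a finite path to $s$ inside $B$ — otherwise a play of positive probability would avoid $s$ forever, contradicting almost-sure reachability.

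For the direction $A^* \subseteq A$, I would exhibit an explicit strategy. For every vertex $q \in A^*$, let $d(q)$ denote the length of a shortest path from $q$ to $s$ inside $A^*$, which is well-defined and at most $|A^*|-1$ by the global condition. Define $\sigma$ at each $q \in A^* \cap V_1$ to pick some edge $qv \in E$ with $v \in A^*$ and $d(v) = d(q)-1$. The local condition guarantees that at every $q \in A^* \cap V_P$ every successor lies in $A^*$, so once play enters $A^*$ it never leaves under $\sigma$. Consequently, from any starting vertex in $A^*$, after at most $N = |A^*|$ steps there is a run that reaches $s$, and its probability is at least $p_{\min}^N$, where $p_{\min}$ is the smallest nonzero value of $\delta$ appearing in the MDP. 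Hence the probability of \emph{not} reaching $s$ in $N$ consecutive steps is bounded above by $1 - p_{\min}^N < 1$, and iterating (a Borel–Cantelli style argument, or a simple geometric bound on the probability of never entering $s$ within the first $kN$ steps for $k \to \infty$) gives that $s$ is reached with probability $1$. Thus $q \in A$.

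The only delicate step is the second direction, since we have to turn combinatorial structure into a probabilistic guarantee; the key observation that makes it work is that the local condition forbids ``escape'' from $A^*$ at probabilistic vertices, so that the bounded-length witness paths furnished by the global condition can be concatenated into infinitely many independent attempts to reach $s$, each of positive probability bounded uniformly away from zero.
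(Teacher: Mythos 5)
Your proof is correct and follows essentially the same route as the paper: one direction by verifying that $A$ itself satisfies the global and local conditions (so $A\subseteq A^*$ by maximality), and the other by fixing the first edge of a shortest path inside $A^*$ at each player-1 vertex and arguing that the resulting finite Markov chain on $A^*$ reaches $s$ almost surely. The only cosmetic difference is in the last step, where the paper appeals to $s$ being the unique recurrent state of that chain while you give the equivalent explicit geometric bound $1-p_{\min}^{N}$ on failing to reach $s$ in each block of $N$ steps.
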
 
Since $A=A^*$ we consider the graph theoretic problem of computation of $A^*$ 
(i.e., the largest set satisfying the global and the local conditions).

\smallskip\noindent{\em Notations.} 
Let $G$ be a directed graph.
We denote its vertex and edge set by $V(G)$ and $E(G)$, respectively.
By $G[S]$ we denote the subgraph of $G$ induced on vertices belonging to $S$, whereas by $G \setminus S$ we denote the subgraph of $G$ induced on $V(G) \setminus S$.
A \emph{separator} is a subset $S \subseteq V(G)$, such that $G \setminus S$ has more connected components than $G$ (when all edges are treated as undirected).

\subsection{Tree decomposition of graphs}
\begin{figure}
\begin{centering}
\includegraphics[scale=0.2]{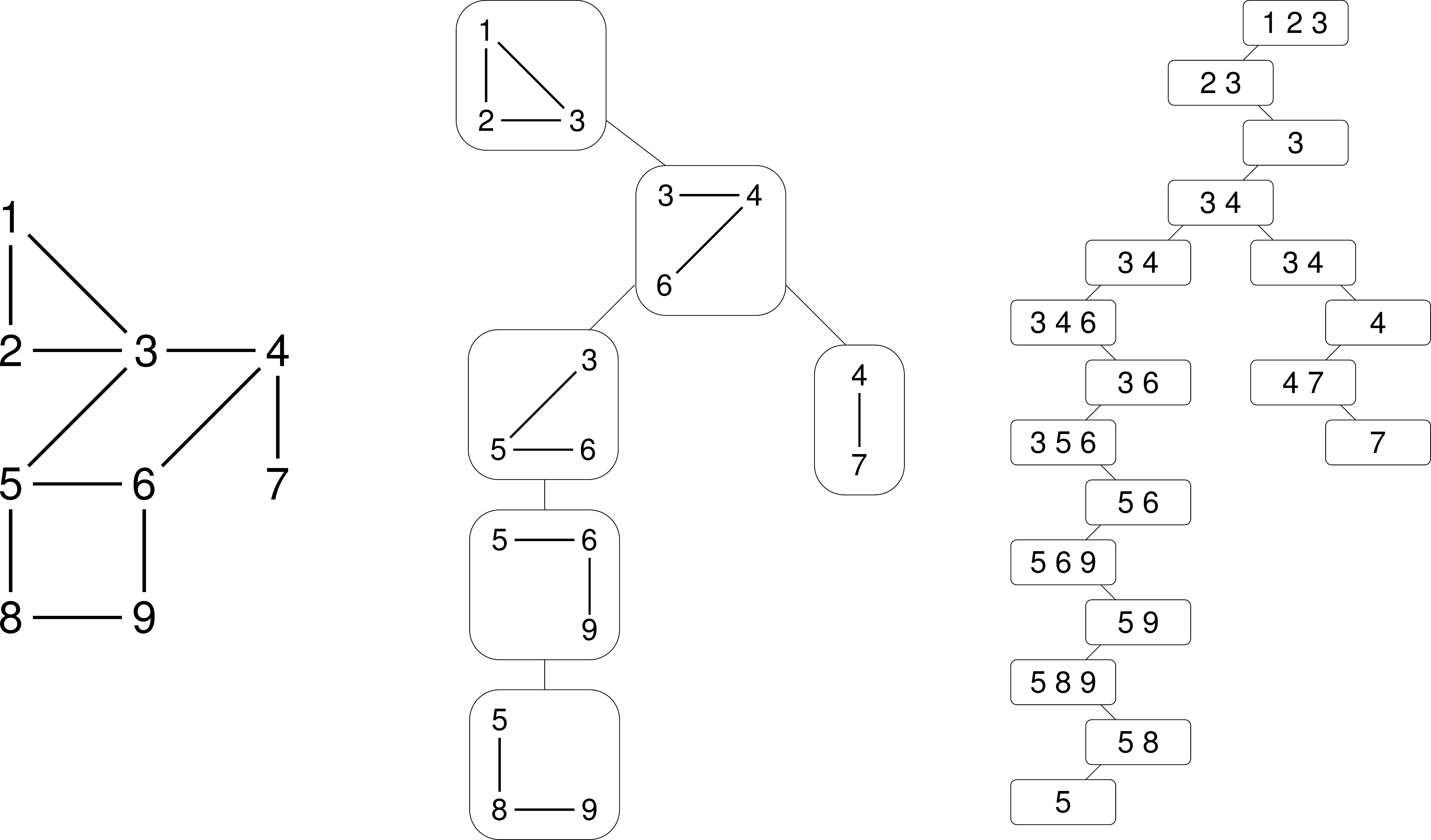}
\caption{\label{fig:tree-dec}A sample graph (left), its tree decomposition (center, edges covered by each bag have been marked for illustration) and a nice tree decomposition (right).}
\end{centering}
\end{figure}

We begin by introducing some definitions depicted in Fig.~\ref{fig:tree-dec}.
\begin{definition}
Let $G=(V,E)$ be an undirected graph.
A \emph{tree decomposition} of $G$ is a pair $(B, T)$, where $B$ is a family 
$B_1, \ldots, B_n$ of subsets of $V$ (called bags) and $T$ is a tree, whose 
nodes are sets $B_i$.
The decomposition satisfies the following properties:
\begin{enumerate}
\item $\bigcup B_i = V$ (bags cover vertices).
\item For every $uv \in E$ there exists $B_j$, such that $u, v \in B_j$ (bags cover edges).
\item For every $v \in V$ the sets $B_i$ containing $v$ form a connected subtree of $T$.
\end{enumerate}
\end{definition}

\begin{definition}
The \emph{width} of a tree decomposition $(B,T)$ is equal to $\max_{B_i \in B} |B_i| - 1$.
The treewidth of an undirected graph is the minimal possible width of its tree decomposition.
\end{definition}

The concept of treewidth grasps the \emph{sparseness} of a graph.
Treewidth of a tree is equal to $1$, while cliques on $n$ vertices have treewitdth $n-1$.
Note that the definitions are given for undirected graphs, but they can also be applied to directed graphs.
In such case, we treat all edges as undirected.

\begin{definition}
A tree decomposition $(B, T)$ is called \emph{nice} if $T$ is a rooted tree and each of its nodes $B_i$ belongs to one of the following four types:
\begin{enumerate}
\item \textbf{leaf} --- $B_i$ is a leaf of $T$ and $|B_i|=1$.
\item \textbf{introduce} --- $B_i$ has a single child $B_j$ and $B_i = B_j \cup \{v\}$.
\item \textbf{forget} --- $B_i$ has a single child $B_j$ and $B_i = B_j \setminus\{v\}$.
\item \textbf{join} --- $B_i$ has two children $B_j$ and $B_k$, and $B_i = B_j = B_k$.
\end{enumerate}
\end{definition}

\begin{theorem}[\cite{Bodlaender96}]\label{thm:bod}
Let $G$ be a graph of treewidth $k$.
Assuming that $k$ is a constant, the tree decomposition of $G$ of width $k$ can be computed in $O(n)$ time. 
\end{theorem}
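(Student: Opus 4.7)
The plan is to sketch the classical divide-and-conquer strategy underlying Bodlaender's algorithm from~\cite{Bodlaender96}. The high-level idea is to reduce the problem on $G$ to the same problem on a strictly smaller graph $G'$ of size at most $c \cdot n$ for some constant $c < 1$, recurse on $G'$, and then lift the resulting tree decomposition back to a decomposition of $G$. This yields a recurrence of shape $T(n) \le T(cn) + O(n)$, which solves to $T(n) = O(n)$.

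The key ingredient is a local reduction based on vertices of small degree. Since a graph of treewidth at most $k$ has at most $O(nk)$ edges, its average degree is bounded by $2k$, and in fact a constant fraction of the vertices have degree bounded by some function $f(k)$ depending only on $k$. From this set one can extract a large \emph{independent} set $I$ of size $\Omega(n)$ whose vertices are pairwise non-adjacent. For each $v \in I$, I would apply a local reduction (for example, adding all edges between vertices in $N(v)$ and then deleting $v$, or using the matching-contraction variant): this produces a graph $G'$ on at most $(1 - \Omega(1))n$ vertices whose treewidth is still at most $k$, since any tree decomposition of $G$ of width $k$ restricts to one for $G'$.

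After recursing to obtain a width-$k$ tree decomposition $(B', T')$ of $G'$, the lifting step re-inserts each removed vertex $v \in I$. Because $v$'s original neighborhood has size bounded by $f(k)$ and must lie in a common bag of $(B', T')$ (by the edge-covering property applied to the reduced graph), adding $v$ to that bag yields a tree decomposition of $G$. The width may temporarily grow above $k$, so the final ingredient is an \emph{improvement} subroutine: given an \emph{approximate} tree decomposition of width $O(k)$, compute an \emph{exact} width-$k$ decomposition in $O(n)$ time by a bottom-up dynamic program along the approximate decomposition, whose state space at each bag is bounded by a function of $k$ only.

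The main obstacle, and the deep technical content of~\cite{Bodlaender96}, is making each phase of the algorithm — identifying the low-degree independent set, performing the local reductions, lifting the recursively computed decomposition, and running the improvement subroutine — run in time $O(f(k) \cdot n)$ with all hidden constants depending only on $k$. Because $k$ is treated as a constant in the statement, this dependence is absorbed into the $O(\cdot)$ notation, yielding the claimed $O(n)$-time bound.
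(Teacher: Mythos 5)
The paper does not prove this theorem at all: it is imported verbatim from~\cite{Bodlaender96} and used as a black box (together with Lemma~\ref{lemma:nice_decomposition}) to justify assuming a nice tree decomposition is available. So there is no in-paper argument to compare against; what you have written is a sketch of Bodlaender's own algorithm, and at the architectural level it is faithful: a constant-factor size reduction, recursion giving $T(n)\le T(cn)+O(n)$, a lifting step, and the Bodlaender--Kloks ``improvement'' dynamic program that turns a width-$O(k)$ decomposition into an exact width-$k$ one in linear time.

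There is, however, one genuine gap in your reduction step, and it is precisely where the real difficulty of~\cite{Bodlaender96} lives. You claim that for a low-degree vertex $v$ the operation ``add all edges between vertices of $N(v)$, then delete $v$'' yields a graph $G'$ with $\mathrm{tw}(G')\le k$ because ``any tree decomposition of $G$ of width $k$ restricts to one for $G'$.'' This is false in general: restricting a width-$k$ decomposition of $G$ to $V\setminus\{v\}$ gives a decomposition of $G-v$, but it need not cover the new clique on $N(v)$, since $N(v)$ need not be contained in a single bag unless $v$ is (essentially) simplicial. Bodlaender's algorithm restricts this reduction to \emph{I-simplicial} vertices of the so-called improved graph exactly to make the claim true, and handles the complementary case by contracting a large matching --- which is treewidth-safe because treewidth is minor-monotone, but which only yields a width-$(2k+1)$ decomposition of $G$ after lifting, forcing the improvement subroutine. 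Your sketch gestures at both variants but conflates them; as written, the independent-set/clique-completion branch does not preserve treewidth and the recursion invariant breaks. Since the statement is a cited external result, the honest options are either to cite it as the paper does, or to spell out the I-simplicial/matching dichotomy correctly; the rest of your outline (averaging argument for low degree, the recurrence, and treating all $f(k)$ factors as constants) is fine.
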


\begin{lemma}[see e.g.~\cite{Kloks94}]
\label{lemma:nice_decomposition}
A tree decomposition can be transformed, in linear time, into a nice tree decomposition of the same width, consisting of $O(n)$ nodes.
\end{lemma}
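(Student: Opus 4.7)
The plan is to transform an arbitrary tree decomposition $(B,T)$ of width $k$ into a nice one by three local passes, each preserving the width. First, I would root $T$ at an arbitrary bag and reduce its size by contracting every edge $\{B_i,B_j\}$ of $T$ with $B_i\subseteq B_j$ (or $B_j\subseteq B_i$); this operation can only decrease the width, and a standard argument shows that the resulting rooted tree has at most $n$ bags.

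Second, I would binarize the tree: any internal bag $B_i$ with $d\ge 3$ children is replaced by a small binary tree of $d-1$ duplicated copies of $B_i$, whose leaves are the original children. Every newly created copy is then a valid \textbf{join} bag (both children equal it). Third, I would walk every remaining parent--child edge $\{B_p,B_c\}$ of the binarized tree and replace it with a chain that gradually converts $B_c$ into $B_p$: one-at-a-time \textbf{forget} nodes drop the vertices of $B_c\setminus B_p$, followed by one-at-a-time \textbf{introduce} nodes adding the vertices of $B_p\setminus B_c$. Such a chain has length at most $|B_c\triangle B_p|\le 2(k+1)$. Finally, any leaf bag of size larger than one is extended downward by a further chain of forget nodes until a singleton is produced.

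Correctness is checked by verifying that the three tree-decomposition properties remain invariant after each pass. The delicate property is that, for every $v\in V$, the set of bags containing $v$ forms a connected subtree: contraction and edge-splitting preserve this trivially, and in the binarization step every duplicated copy of $B_i$ contains exactly the same vertices as $B_i$. The width is preserved because every new bag is either a copy of or a subset of an existing one. For complexity, the first pass runs in linear time in the size of $T$, binarization introduces at most one extra node per original child, and the introduce/forget chains add $O(k)$ nodes per edge, giving $O(n\cdot k)=O(n)$ nodes overall under the constant-treewidth convention used throughout the paper.

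The main obstacle — though a mild one — is engineering these passes to run in genuine linear time: the chains on each edge $\{B_p,B_c\}$ must be built by a single simultaneous scan of the two bags (e.g.\ via sorted representations or hash sets of the bag contents), and binarization must construct the balanced copies in time proportional to the number of children rather than re-examining the full bag each time. With standard adjacency-list representations of $T$ and the bags, every created node costs $O(1)$ work, so the total cost matches the $O(n)$ size of the output.
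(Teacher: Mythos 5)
This lemma is stated in the paper without proof, as a standard fact cited from Kloks' book, and your construction (contract subset-related neighbouring bags, binarize high-degree nodes into join copies, subdivide each remaining edge into a forget-then-introduce chain, and pad oversized leaves down to singletons) is exactly the textbook argument behind that citation, including the $O(n\cdot k)=O(n)$ node count under the constant-width convention the paper uses. One small correction: in your last step the nodes of the downward chain below an oversized leaf are \textbf{introduce} nodes in the paper's rooted terminology (each parent equals its child plus one vertex), not forget nodes; similarly, the duplicated copies in the binarization step only become legitimate join nodes after your third pass has equalized their children's bags, so the passes cannot be verified independently in the order you present them. Neither point affects the soundness of the construction.
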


We also use the following well-known fact, which can be derived from the definition.
Consider a vertex $t_B$ of a tree decomposition $T$ of a graph $G$ and assume that it contains a bag $B \subset V(G)$.
Denote the connected components of $T \setminus \{t_B\}$ by $T_1, \ldots, T_k$.
Then, each connected component of $G \setminus B$ corresponds to one of the subtrees $T_i$ (i.e. the vertices from the connected component are all covered only by bags from the subtree $T_i$).
This is formalized in the Lemma below.

\begin{lemma}
\label{lemma:separator}
Let $B$ be a bag in a node $t_B$ of the tree decomposition of $G$.
Consider the connected components $T_1, \ldots, T_k$ of $T \setminus \{t_B\}$.
Then the following hold:
\begin{enumerate}
\item Either $B$ is a separator in $G$ or all but one $T_i$ consist solely of bags that are subsets of $B$.

\item Each path from a vertex $u \not\in B$ covered with a bag in $T_i$ to a vertex $v \not\in B$ covered with a bag in $T_j$ ($i \neq j$) goes through a vertex in $B$.
\end{enumerate}
\end{lemma}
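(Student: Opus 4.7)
The plan is to prove part~2 first and derive part~1 from it. The common engine is the tree-decomposition property that for every vertex $x \in V(G)$ the set of nodes of $T$ whose bag contains $x$ is a connected subtree of $T$. In particular, if $x \notin B$ then this subtree avoids $t_B$ and therefore sits entirely inside a unique component $T_i$ of $T \setminus \{t_B\}$, assigning a well-defined ``component label'' to every vertex outside $B$.

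For part~2, take a path $u = x_0, x_1, \dots, x_\ell = v$ in $G$ and let $S_m$ denote the subtree of bags containing $x_m$. The edge-covering axiom applied to $x_m x_{m+1}$ gives $S_m \cap S_{m+1} \neq \emptyset$ for every $m$. Assume for contradiction that no $x_m$ lies in $B$; then every $S_m$ avoids $t_B$ and hence lives in a unique component $T_{i_m}$, and the non-empty intersection forces $i_m = i_{m+1}$. Thus the component label is constant along the path, but $x_0 = u$ gives $i_0 = i$ and $x_\ell = v$ gives $i_\ell = j$, contradicting $i \neq j$. Hence some internal $x_m$ must belong to $B$.

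For part~1, assume $B$ is not a separator. Suppose for contradiction that two distinct components $T_i, T_j$ each contain a bag with a vertex outside $B$, and pick witnesses $u, v \notin B$ with $u$ covered by a bag in $T_i$ and $v$ covered by a bag in $T_j$. By part~2 every $u$-to-$v$ walk in the underlying undirected graph passes through a vertex of $B$, so $u$ and $v$ lie in different connected components of $G \setminus B$. Since $G$ is connected in the intended setting, $u$ and $v$ lie in the same component of $G$, and hence $G \setminus B$ has strictly more components than $G$, contradicting our hypothesis. Therefore at most one of the $T_i$ contains a bag with a vertex outside $B$, so all but at most one of the $T_i$ consist solely of bags that are subsets of $B$.

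The main obstacle is the label-invariance step in part~2: once two consecutive path vertices both lie outside $B$, the fact that they share a covering bag and that each vertex's bag-subtree is connected forces them into the same component of $T \setminus \{t_B\}$; this is the one place where the axioms of the tree decomposition really do the work. Once part~2 is in place, part~1 reduces to a clean connectivity count. The only subtlety worth flagging is that the separator formulation implicitly assumes that the witnesses $u$ and $v$ lie in a common connected component of $G$, which is the case in the MDP setting used later but should otherwise be applied componentwise.
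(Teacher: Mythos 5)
The paper gives no proof of this lemma at all --- it is introduced as a well-known fact ``which can be derived from the definition'' --- so there is no in-paper argument to compare against line by line. Your proof is correct and is the standard derivation. The engine is exactly right: for $x \notin B$ the connected subtree $S_x$ of nodes whose bags contain $x$ avoids $t_B$ and therefore lies in a single component of $T \setminus \{t_B\}$ (this is also the observation the paper records immediately after the lemma), and the edge-covering axiom forces two consecutive path vertices outside $B$ to share a covering bag and hence the same component label; part~2 follows, and part~1 reduces to a component count. The caveat you flag at the end is genuine and worth keeping: with the paper's definition of a separator ($G \setminus S$ has strictly more connected components than $G$), part~1 as literally stated can fail for disconnected $G$. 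For instance, take $G$ with a single edge $ab$ and an isolated vertex $c$, with decomposition $\{a,b\} - \{a\} - \{c\}$ and $B = \{a\}$ at the middle node: $B$ is not a separator, yet both components of $T \setminus \{t_B\}$ contain a bag with a vertex outside $B$. So the connectivity assumption (or a per-component reading of ``separator'') is genuinely needed, and your proof correctly isolates the single place where it enters; in the MDP application the graph may be treated componentwise, so nothing downstream is affected.
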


Observe that a vertex not belonging to $B$ can be covered by bags from at most one $T_i$.
This is because the set of bags covering a given vertex forms a connected subgraph of $T$.

\section{Algorithms for MDPs with Constant Tree-width}
In this section we will first present an algorithm for computing the \asr set, 
whose running time depends linearly on the 
size of the input MDP graph, where the input graph has constant treewidth.
We will then present the linear-time algorithm for \mec decomposition for MDPs with constant treewidth graphs.
The algorithms require that a tree decomposition of the graph of width $k$ 
is given and run in time that is exponential in $k$.
If $k$ is a constant, the decomposition can be computed in linear time (see Theorem~\ref{thm:bod}).
To simplify presentation, we use Lemma~\ref{lemma:nice_decomposition} to transform the decomposition 
to a \emph{nice} one.

\subsection{Almost-sure reachability}
Our algorithm for the \asr set computation is based on the following separation property.

\begin{lemma}
\label{lemma:separation_property}
Let $B$ be a subset of $V(G)$, such that the target vertex $s$ belongs to $B$.
Denote the connected components of $G \setminus B$ by $C_1, \ldots, C_k$.
Assume that we know the intersection of the \asr set $A$ with $B$.
For each $i = 1, \ldots, k$, construct the subgraph of $G$ induced on $C_i \cup B$.
Add to this graph a set of edges $\{vs | v \in A \cap B\}$, thus obtaining a \emph{patched} component $\overline{C}_i$.
Denote by $A_i$ the \asr set in $\overline{C}_i$.
Then we have $A = A_1 \cup \ldots \cup A_k$.
\end{lemma}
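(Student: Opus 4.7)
The plan is to establish $A = A_1 \cup \cdots \cup A_k$ by proving both set inclusions, exploiting the maximality characterization of the \asr set together with the separator structure of $B$ (by Lemma~\ref{lemma:separator}, every path in $G$ between distinct components $C_i$ and $C_j$ passes through $B$).

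For the forward inclusion $A \subseteq A_1 \cup \cdots \cup A_k$, I would fix $q \in A$, pick an index $i$ with $q \in C_i \cup B$, and argue that the set $A \cap (C_i \cup B)$ satisfies the global and local conditions inside $\overline{C}_i$; maximality of $A_i$ then yields $q \in A_i$. The local condition transfers since every edge of $\overline{C}_i$ is either an edge of $G$ (covered by the local condition on $A$ in $G$) or an added edge $v \to s$ with $v \in A \cap B$, for which $s \in A$ trivially. For the global condition, take a path from $q$ to $s$ in $G[A]$: if $q \in C_i$, the separator property guarantees that the first vertex on the path that leaves $C_i$ lies in $A \cap B$, and the corresponding added edge to $s$ produces a path inside $\overline{C}_i$ using only vertices from $A \cap (C_i \cup B)$. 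The case $q \in B$ is handled similarly, using the added edge $q \to s$ directly.

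For the reverse inclusion, let $A' = A_1 \cup \cdots \cup A_k$ and argue that $A'$ satisfies the global and local conditions in $G$; maximality of $A$ then gives $A' \subseteq A$. For $q \in A_i$, the global condition follows by taking the witnessing path in $\overline{C}_i$ and substituting each patched edge $v \to s$ by a genuine $v$-to-$s$ path in $G[A]$, which exists since $v \in A \cap B$ and (by the forward inclusion) $A \subseteq A'$. For the local condition on $u \in A' \cap V_P$ with edge $u \to v$ in $G$: if $u$ lies in some $C_i$, then all out-neighbors of $u$ lie in $C_i \cup B$ by the separator property, so the edge sits inside $\overline{C}_i$ and $v \in A_i \subseteq A'$ by the local condition on $A_i$.

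The principal obstacle is the local condition when $u \in V_P \cap B$ has out-neighbors scattered across several components of $G \setminus B$: an edge $u \to v$ with $v \in C_j$ may simply be absent from the component $\overline{C}_i$ that witnesses $u \in A_i$, so the local condition inside $\overline{C}_i$ does not by itself certify $v \in A'$. The way around this is to first establish that the patched construction forces $A_i \cap B$ to coincide with $A \cap B$ for every $i$: any vertex of $B$ that lies in some $A_i$ must already belong to $A$. Granting this, any $u \in A' \cap V_P \cap B$ lies in $A$, the local condition on $A$ in $G$ yields $v \in A$, and the forward inclusion places $v \in A'$. Pinning down how the added edges $\{v \to s : v \in A \cap B\}$ calibrate $A_i$ on the separator $B$ to exactly match $A \cap B$ is the crucial technical step.
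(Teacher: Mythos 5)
Your forward inclusion is sound: $A\cap(C_i\cup B)$ satisfies both conditions in $\overline{C}_i$, so $A\subseteq A_1\cup\dots\cup A_k$ and moreover $A\cap B\subseteq A_i\cap B$ for every $i$. The problem is the reverse inclusion, and specifically the step you yourself single out as "the crucial technical step": the claim that $A_i\cap B\subseteq A\cap B$. You do not prove it, and under the literal reading of the lemma it is false. Take $V=\{s,u,v,w\}$ with $u\in V_P$, the others in $V_1$, edges $s\to s$, $u\to s$, $u\to v$, $w\to u$, $v\to v$, and $B=\{s,u\}$, so that $C_1=\{w\}$ and $C_2=\{v\}$. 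Here $v\notin A$ (no path to $s$), hence $u\notin A$ (the local condition on the edge $u\to v$), hence $w\notin A$, so $A=\{s\}$ and $A\cap B=\{s\}$. But in $\overline{C}_1$, induced on $\{w,s,u\}$, the offending edge $u\to v$ is simply absent, the only out-edge of $u$ is $u\to s$, and $A_1=\{s,u,w\}$. So $A_1\cap B\not\subseteq A$ and indeed $A_1\cup A_2\neq A$: a probabilistic vertex of the separator whose "bad" out-neighbour lies in a different component is invisible to the patched component, and no calibration by the added edges $\{vs\mid v\in A\cap B\}$ can detect it, since those edges only ever make it easier to reach $s$. The gap is therefore not a missing argument but a statement that cannot be rescued in this form.

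This is consistent with how the paper treats the lemma: it explicitly declines to prove it ("we do not prove the separation property explicitly") and instead derives correctness of the algorithm from Lemma~\ref{lemma:p_is_maximal_asr}, whose objects are not the sets $A_i$ but the valid partial solutions of Definition~\ref{definition-partial-solution}. Those differ from your $A_i$ in exactly the respect that matters: a partial solution $P(B',d)$ is constrained to satisfy $P(B',d)\cap B_d=B'$ \emph{exactly}, so for the correct guess $B'=A\cap B_d$ a vertex such as $u$ above is excluded by fiat (it lies in $B_d\setminus B'$), and anything in $C_i$ that can only reach $B'$ through $u$ is then excluded by the path condition. If you want a correct standalone separation lemma, you must build this constraint into the statement — e.g., define $A_i$ as the largest set satisfying the two conditions in $\overline{C}_i$ \emph{subject to} $A_i\cap B=A\cap B$ — and then your two-inclusion argument goes through essentially as you wrote it, with the $V_P\cap B$ case of the local condition discharged by the constraint rather than by the false coincidence claim.
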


Lemma~\ref{lemma:separation_property} says that if we know $A \cap B$, then we can compute the \asr set 
independently in each (patched) connected component of $G \setminus B$ and then simply merge the results.
Since we assume that $G$ has low treewidth, it also has separators of small size.
Thus, in the algorithm we can guess $A \cap B$, by checking all possibilities.
We do not prove the separation property explicitly.
Instead, we give the algorithm inspired with this property and then prove its correctness.
The property will follow from Lemma~\ref{lemma:p_is_maximal_asr}.
Let us now describe the details.

Denote the nice tree decomposition of $G$ by $T$.
Without loss of generality we may assume that the target vertex $s$ belongs to every bag of $T$ and that the decomposition is rooted in a node with bag $\{s\}$.
Indeed, if this is not the case, we may modify the decomposition $T$ as follows.
First, we add $s$ to every bag.
Then, for every leaf of $T$ that contains two vertices in its bag, we add a child with a bag $\{s\}$.
If after the two steps we have a node $d$ with a single child $c$ such that the bags of $d$ and $c$ are equal, we merge them together (i.e. contract the edge connecting them).
Lastly, we add a new root node with a bag $\{s\}$ and connect it with a chain of \textbf{forget} nodes to the original root.
It is easy to see that the process increases the width of $T$ at most by one and yields a nice tree decomposition.

The algorithm is based on a bottom-up dynamic programming on $T$.
Fix a node $d$ of $T$, and assume that it contains a bag $B_d$.
Denote by $G_d$ the subgraph of $G$ induced on the vertices enclosed in the bags from the subtree rooted at $d$.
By Lemma~\ref{lemma:separator}, $B_d$ separates $G_d \setminus B_d$ from the rest of the graph.

Now, according to Lemma~\ref{lemma:separation_property}, for each subset $B' \subseteq B_d$ we should add edges $\{vs | v \in B'\}$ to $G_d$ and compute the \asr set of the obtained graph.
However, we do a slightly different thing: instead of adding edges, we just treat all vertices of $B'$ as target vertices (note that this has the same effect
as adding edges from vertices in $B'$ to $s$).
This motivates the following definition of a partial solution.
Partial solution is defined with respect to a subgraph of $G_d \subseteq G$, and, informally, it is the set of vertices from $G_d$ that will be included in the \asr set.

\begin{definition}
\label{definition-partial-solution}
A \emph{partial solution} for a node $d$ is a subset of $V(G_d)$. A partial solution $P$ is called \emph{valid}, if the following hold. 
\begin{enumerate}[label=\roman{*}., ref=(\roman{*})]
\item\label{vps:first} For every $v \in P \cap V_P$ and every edge $vu \in E(G_d)$, we have $u \in P$.
\item\label{vps:second} For every $v \in P$ there exists a path in $P$ that connects $v$ to some vertex in $P \cap B_d$.
\end{enumerate}
\end{definition}

We denote by $P(B', d)$ the maximal (w.r.t. inclusion) valid partial solution (for node $d$) which satisfies $P(B', d) \cap B_d = B'$.\footnote{In the end we prove slightly less about the values $P(\cdot, \cdot)$ that are computed by the algorithm, but it is convenient to think about them this way.}
Observe that the definition is unambiguous, since the union of two valid partial solutions is a valid partial solution.
However, it might be the case that for some choice of $B'$ there are no feasible valid partial solutions.
In such a case we set $P(B', d) = \bot$.
We later show that if $B' = A \cap B_d$, then $P(B', d) = A \cap V(G_d)$.

The algorithm considers possible ways of including a subset of $B_d$ in the \asr set, by iterating through all \emph{valid} subsets $B' \subseteq B_d$.
A subset $B' \subseteq B_d$ is valid, if it contains the target $s$ and for each $v \in B' \cap V_P$ and every edge $vu \in E \cap (B_d \times B_d)$, we have $u \in B'$.
In particular, for any valid partial solution $P$ containing $s$, the set 
$P \cap B_d$ is a valid subset.

In addition to $P(B', d)$, for each valid $B' \subseteq B$ and each pair of vertices $x, y \in B'$, we compute whether there exists an $x$-to-$y$ path consisting of vertices contained in $P(B', d)$.
Formally, we compute the transitive closure of $G[P(B', d)]$, restricted to $B'$.
In the following this transitive closure is denoted by $TC(B', d)$.
Note that it is a subset of $B_d \times B_d$.

The algorithm is run bottom-up on $T$.
For a given node $d$ and each valid subset $B'$ it computes $P(B', d)$ and $TC(B', d)$, using the values from the children of $d$.
There are four cases to consider, one for each type of node.
In the description, we assume that the value $\bot$ is \emph{propagating}.
This means, that the result of any set operation involving $\bot$ is $\bot$.

\begin{itemize}
\item \textbf{Leaf} The bag contains a single vertex $s$ (the target), so the transitive closure is empty and we set $P(\{s\}, d) = \{s\}$.
\item \textbf{Join} Denote the children of $d$ by $c_1$ and $c_2$.
In this case, we set $P(B', d) = P(B', c_1) \cup P(B', c_2)$, so the transitive closures from the children have to be combined, i.e. $TC(B', d) = (TC(B', c_1) \cup TC(B', c_2))^{*}$.
The asterisk denotes the operation of computing the transitive closure.

\item \textbf{Introduce} Denote the introduced vertex by $w$ and the child of $d$ by $c$.
For all valid subsets $B' \subseteq B_d$ that do not contain $w$, we set $P(B', d) = P(B', c)$ and $TC(B', d) = TC(B', c)$.
If $w \in B'$, then $P(B', d) = P(B' \setminus\{w\}, c) \cup \{w \}$.
Thus, to compute the transitive closure in this case, we take $TC(B' \setminus \{w\}, c)$, add all edges incident to $w$ and compute the transitive closure of the obtained set.
Hence, $TC(B', d) = (TC(B' \setminus \{w\}, c) \cup \{wz \in E(G) | z \in B'\} \cup \{zw \in E(G) | z \in B'\})^{*}$.

\item \textbf{Forget} Denote the vertex that is forgotten by $w$ and the child of $d$ by $c$.
Hence, the bag in the child $B_c$ is equal to $B_d \cup \{w\}$.
We check whether we can include $w$ in $P(B', d)$.
For this, condition~\ref{vps:second} (of Definition~\ref{definition-partial-solution}) has to hold, i.e., there has to be a path in $P(B', d)$ that connects $w$ to some vertex in $B'$.
We claim that it suffices to check, whether $w$ has any out-edges in $TC(B' \cup \{w\}, c)$.
If this is the case, then $w$ is connected to some vertex from $B'$ in $P(B' \cup \{w\}, c)$, so $P(B', d) = P(B' \cup \{w\}, c)$ and we can set $TC(B', d) = TC(B' \cup \{w\}, c) \cap (B' \times B')$.
Otherwise, we just copy the result from the child, that is set $P(B', d) = P(B', c)$ and $TC(B', d) = TC(B', c)$.
\end{itemize}

Finally, the \asr set computed by the algorithm is stored in $P(\{s\}, r)$.
We now prove the correctness of the algorithm with the following two lemmas.
The proof of Lemma~\ref{lemma:p_is_vps} is presented in the appendix.




\begin{lemma}
\label{lemma:p_is_vps}
For each node $d$ and each valid subset $B' \subseteq B_d$, if $P(B', d) \neq \bot$, then $P(B', d)$ is a valid partial solution and $TC(B', d)$ is computed correctly.
\end{lemma}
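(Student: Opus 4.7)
The proof is by structural induction on the nice tree decomposition, processing nodes bottom-up. The base case, a leaf $d$ with $B_d = \{s\}$ and $P(\{s\}, d) = \{s\}$, is immediate: condition~\ref{vps:first} is vacuous, condition~\ref{vps:second} is trivial, and $TC(\{s\}, d)$ is empty, as required. For the inductive step I fix a non-leaf node $d$ and a valid $B' \subseteq B_d$ with $P(B', d) \neq \bot$, assume the claim for every descendant of $d$, and verify both properties in each of the three remaining node types.

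The workhorse throughout is Lemma~\ref{lemma:separator}: at an introduce or forget node, $B_c$ separates $V(G_c) \setminus B_c$ from the rest of $G$, and at a join node $B_d = B_{c_1} = B_{c_2}$ separates $V(G_{c_1}) \setminus B_d$ from $V(G_{c_2}) \setminus B_d$. Combined with the inductive identity $P(B'', c) \cap B_c = B''$, this lets the closure property~\ref{vps:first} and the reachability witnesses for~\ref{vps:second} compose cleanly. For the \textbf{join} case, every vertex of $P(B', d) = P(B', c_1) \cup P(B', c_2)$ sits inside some $P(B', c_i)$ and inherits its path to $B'$; a player-P vertex's out-neighbors either stay inside the same $G_{c_i}$ by the separator property, or, if the vertex lies in $B_d$, land in $B' \subseteq P(B', d)$. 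Any $B'$-to-$B'$ path in $G[P(B', d)]$ decomposes into maximal sub-paths with endpoints in $B_d$, each lying entirely in one child subgraph, so $(TC(B', c_1) \cup TC(B', c_2))^*$ is exactly the desired transitive closure. For the \textbf{introduce} case with new vertex $w$, all $G_d$-edges incident to $w$ are confined to $B_c$, so condition~\ref{vps:first} follows directly from validity of $B'$, and the TC update inserts the new edges incident to $w$ and closes, producing the correct transitive closure.

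The main obstacle is the \textbf{forget} case, where $G_d = G_c$ and the algorithm must decide whether to include the forgotten vertex $w$. The branch condition, that $w$ has an out-edge in $TC(B' \cup \{w\}, c)$, is by the inductive correctness of $TC$ a certificate of a $w$-to-$B'$ path inside $G[P(B' \cup \{w\}, c)]$. Combined with the inductive fact that every vertex of $P(B' \cup \{w\}, c)$ already reaches $B' \cup \{w\}$, this certificate extends every path ending at $w$ to one reaching $B'$, giving condition~\ref{vps:second} for $P(B', d) = P(B' \cup \{w\}, c)$; condition~\ref{vps:first} and the intersection $P(B', d) \cap B_d = B'$ transfer from the child since $G_d = G_c$ and the new bag drops exactly $w$. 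In the opposite branch $P(B', d) = P(B', c)$ and nothing needs adjustment. The most delicate TC point is that $TC(B' \cup \{w\}, c) \cap (B' \times B')$ captures all $B'$-to-$B'$ reachability in $G[P(B', d)]$: a transitive closure records pairs connected by \emph{some} path through arbitrary intermediates, so restricting the endpoint set to $B'$ loses no pair. Stitching the four cases together yields the lemma.
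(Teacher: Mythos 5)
Your proof follows essentially the same route as the paper's: induction on the subtree depth with a case analysis over the four node types, using Lemma~\ref{lemma:separator} to split $B'$-to-$B'$ paths at a join node into subpaths each confined to one child's subgraph, and reading the forget-node branch condition as a certificate of a $w$-to-$B'$ path that extends the child's reachability witnesses. One sub-claim in your join case is stated too strongly --- the out-neighbors of a vertex $v \in B_d \cap V_P$ need not land in $B'$; rather, each such edge lies in some $E(G_{c_i})$ and, since $v \in B' \subseteq P(B', c_i)$, validity of the child's partial solution places the out-neighbor in $P(B', c_i) \subseteq P(B', d)$ --- but this is a local fix and the argument otherwise matches the paper's.
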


\begin{lemma}
\label{lemma:p_is_maximal_asr}
Let $A$ be the maximum \asr set.
For each node $d$, $P(A \cap B_d, d) = A \cap V(G_d)$.
\end{lemma}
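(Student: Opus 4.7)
The plan is to prove the statement by bottom-up induction on the nice tree decomposition $T$, showing in each of the four node types that the algorithm's update rule transforms $A\cap V(G_c)$ at children into $A\cap V(G_d)$ at $d$. The base case is trivial: at a leaf we have $B_d=V(G_d)=\{s\}$ and $s\in A$, so $P(\{s\},d)=\{s\}=A\cap V(G_d)$. For a \textbf{join} node, the rule $P(B',d)=P(B',c_1)\cup P(B',c_2)$ combined with $V(G_d)=V(G_{c_1})\cup V(G_{c_2})$ gives the result directly from the inductive hypothesis applied at both children. For an \textbf{introduce} node the argument is almost as immediate: a case split on whether the introduced vertex $w$ lies in $A$ (equivalently, whether $w\in A\cap B_d$) matches exactly the two branches of the algorithm, since $w\notin V(G_c)$ in a nice decomposition.

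The substantive case is the \textbf{forget} node, where $B_c=B_d\cup\{w\}$ and $V(G_d)=V(G_c)$. Split on whether $w\in A$. If $w\in A$, then $(A\cap B_d)\cup\{w\}=A\cap B_c$, so by induction $P(A\cap B_c,c)=A\cap V(G_c)$. I would then use Lemma~\ref{lemma:separator}: the global condition for $w\in A$ gives a path from $w$ to $s$ in $A$, and since $B_d$ separates $V(G_d)\setminus B_d$ from the rest of $G$ and $s\in B_d$, a shortest prefix of this path must reach some $y\in A\cap B_d$ while staying inside $V(G_d)$. This path certifies that $w$ has out-edges in $TC((A\cap B_d)\cup\{w\},c)$ (using Lemma~\ref{lemma:p_is_vps} that $TC$ is computed correctly), so the algorithm takes the branch $P(A\cap B_d,d)=P(A\cap B_c,c)=A\cap V(G_c)=A\cap V(G_d)$, as required.

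The main obstacle is the opposite sub-case, $w\notin A$, where we must rule out that the algorithm spuriously includes $w$ using some valid partial solution $P^{\star}:=P((A\cap B_d)\cup\{w\},c)$ even though $w\notin A$. I would argue by contradiction: suppose $P^{\star}\neq\bot$ and $w$ has an out-edge in $TC((A\cap B_d)\cup\{w\},c)$. Consider the set $A\cup P^{\star}$ in the whole graph $G$. Using the separator property again, every player-$P$ vertex $u\in P^{\star}\setminus A$ either lies in $V(G_c)\setminus B_c$ or equals $w$ (both having all neighbors in $V(G_c)$), or lies in $B_d$ and is therefore already in $A\cap B_d\subseteq A$; in each case the local condition propagates, so $A\cup P^{\star}$ satisfies the local condition. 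For the global condition, each $v\in P^{\star}$ reaches some $x\in (A\cap B_d)\cup\{w\}$ inside $P^{\star}$; if $x\in A\cap B_d$ it continues in $A$ to $s$, and if $x=w$ we use the $TC$-witnessed path from $w$ to a vertex of $A\cap B_d$. Hence $A\cup P^{\star}$ satisfies both global and local conditions, so by maximality of $A$ we get $P^{\star}\subseteq A$, contradicting $w\in P^{\star}\setminus A$. Therefore the algorithm falls through to the branch $P(A\cap B_d,d)=P(A\cap B_d,c)$, and induction closes the case.

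Throughout, I would rely on Lemma~\ref{lemma:p_is_vps} to treat $P(\cdot,\cdot)$ as a genuine valid partial solution whenever it is non-$\bot$, and on Lemma~\ref{lemma:separator} to localise neighborhoods below $B_d$; the crux, and the only place a non-routine argument is needed, is the maximality contradiction in the forget/$w\notin A$ case.
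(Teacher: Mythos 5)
Your proposal is correct and follows essentially the same route as the paper's proof: induction on the subtree depth, with the forget node as the only nontrivial case, handled by the same maximality contradiction (if $w\notin A$ but the algorithm would include it, then $A\cup P((A\cap B_d)\cup\{w\},c)$ would be a strictly larger set satisfying both conditions). Your verification of the local condition for $A\cup P^{\star}$ is in fact slightly more explicit than the paper's, which dismisses that step with ``it follows easily.''
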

\begin{proof}
The proof proceeds by induction on the depth of the subtree rooted in $d$.
First, it is easy to see that $A \cap B_d$ is a valid subset for $d$.
Moreover, $A \cap V(G_d)$ is a valid partial solution for $d$.
Let us check condition~\ref{vps:second} of Definition~\ref{definition-partial-solution}.
For each $v \in A$ there exists an $v$-to-$s$ path $p$ in $A$.
Denote by $v_l$ the last vertex of $p$ that lies inside $A \cap V(G_d)$.
By Lemma~\ref{lemma:separator}, $v_l \in B_d$ and consequently also $v_l \in A \cap B_d$.

\begin{itemize}
\item \textbf{Leaf} $P(A \cap B_d, d) = P(\{s\}, d) = \{s\} = A \cap V(G_d)$.
\item \textbf{Join} By induction hypothesis we have $P(A \cap B_{c_i}, c_i) = A \cap V(G_{c_i})$, for $i = 1, 2$.
From the definition $P(A \cap B_d, d) = P(A \cap B_d, c_1) \cup P(A \cap B_d, c_2) = P(A \cap B_{c_1}, c_1) \cup P(A \cap B_{c_2}, c_2) = (A \cap V(G_{c_1})) \cup (A \cap V(G_{c_2})) = A \cap (V(G_{c_1}) \cup V(G_{c_2})) = A \cap V(G_d)$.
\item \textbf{Introduce} If $A$ does not contain the introduced vertex $w$, then $P(A \cap B_d, d) = P(A \cap B_c, c) = A \cap V(G_c) = A \cap (V(G_d) \setminus \{w\}) = A \cap V(G_d)$.
Otherwise, if $w \in A$ we have $P(A \cap B_d, d) = P((A \cap B_d) \setminus\{w\}, c) \cup \{w\} = (A \cap V(G_c)) \cup \{w\} = A \cap (V(G_d) \setminus \{w\}) \cup \{w\} = A \cap V(G_d)$.
\item \textbf{Forget} 
Denote the forgotten vertex by $w$.

We claim that $w \in A$ iff $A \cap B_c$ is a valid subset of $B_c$ and $w$ has some out-edges in $TC(A \cap B_c, c)$.
($\Rightarrow$) It follows immediately that $A \cap B_c$ is a valid subset.
Moreover, since there is a path from $w$ to $s$ in $A$, by Lemma~\ref{lemma:separator}, there has to be a path that connects $w$ to some vertex in $(A \cap B_c) \setminus\{w\}$ in $P(A \cap B_c, c)$.
($\Leftarrow$) Assume that $w \not\in A$.
We show that $A \cup P((A \cap B_c) \cup \{w\}, c)$ is an almost-sure reachable set that is larger than $A$.
Indeed, we know that from every vertex in $P((A \cap B_c) \cup \{w\}, c)$ there is a path to a vertex in $(A \cap B_c) \cup \{w\}$, hence also a path to $A \cap B_c$.
In addition, from every vertex in $A \cap B_c$ there is a path to $s$.
It follows easily that condition~\ref{vps:second} of being an \asr set also holds, which shows the desired.

Now, if $w \in A$, the algorithm sets $P(A \cap B_d, d) = P((A \cap B_d) \cup \{w\}, c) = P(A \cap B_c, c) = A \cap V(G_c) = A \cap V(G_d)$.
On the other hand, if $w \not\in A$, we have $P(A \cap B_d, d) = P(A \cap B_d, c) = P(A \cap (B_c \setminus \{w\}), c) = P(A \cap B_c, c) = A \cap V(G_c) = A \cap V(G_d)$. 
\qed
\end{itemize}
\end{proof}

By applying Lemma~\ref{lemma:p_is_maximal_asr} to the root $r$ of the tree decomposition, we obtain that $P(A \cap V(G), r) = A \cap V(G) = A$.
Let us now analyze  the running time. 

\smallskip\noindent{\em Running time analysis.}
We represent $TC(\cdot, \cdot)$ with a $(k+2) \times (k+2)$ matrix.
(In the original tree decomposition bags had size $k+1$, but then we added the  
vertex $s$ to every bag.)
The sets $P(\cdot, \cdot)$ can be represented implicitly, that is for a set $P(B, d)$ we store how it can be obtained from the respective sets contained in the children of $d$.
This requires constant memory for each set.
We iterate through $O(2^k)$ subsets of each bag.
Checking whether a set is valid boils down to inspecting all edges inside a bag, which can be done in $O(k^2)$ time.
The most costly operation performed for each valid subset is the computation of the transitive closure of a graph containing $O(k)$ vertices.
This can be achieved in $O(k^{\kexp})$ time by using fast matrix multiplication (\cite{Coppersmith90}, \cite{Williams12}).\footnote{In practice, a simple $k^3$ algorithm might a better choice than algebraic algorithms for multiplying matrices.}
Restoring the result takes time that is linear in the size of the tree decomposition.
By Lemma~\ref{lemma:nice_decomposition}, the decomposition consists of $O(n)$ nodes.
Hence, the algorithm runs in $O(n \cdot 2^k \cdot k^{\kexp})$ time.
We obtain the following result.

\begin{theorem}
Given an MDP and its tree decomposition of width $k$ of the MDP graph, the \asr set can be computed in $O(n\cdot 2^k\cdot k^{\kexp})$ time, where $n$ is the number of states (vertices).
\end{theorem}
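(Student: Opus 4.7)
The plan is to assemble correctness from the already-established Lemma~\ref{lemma:p_is_vps} and Lemma~\ref{lemma:p_is_maximal_asr}, and to bound the cost of the bottom-up dynamic program described above. First I would verify the preprocessing: given a tree decomposition of width $k$, augment it so that the target $s$ lies in every bag and the root bag equals $\{s\}$ (this raises the width by at most one), and then apply Lemma~\ref{lemma:nice_decomposition} to obtain a nice decomposition with $O(n)$ nodes and width $O(k)$.

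For correctness, I would invoke Lemma~\ref{lemma:p_is_maximal_asr} at the root $r$: since $B_r=\{s\}$ and $G_r=G$, the algorithm returns $P(\{s\},r) = A\cap V(G) = A$, which by Lemma~\ref{lemm:equalA} coincides with the almost-sure reachability set. Lemma~\ref{lemma:p_is_vps} certifies in addition that each computed $P(B',d)$ is an actual valid partial solution (and not a spurious placeholder inherited through the $\bot$-propagation rule) and that each $TC(B',d)$ correctly encodes reachability inside $G[P(B',d)]$ restricted to $B'$; this is exactly the information the forget step needs in order to decide whether a forgotten vertex can be absorbed into the current partial solution.

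For the running time I would account for work node by node. Each $P(B',d)$ is stored implicitly as a constant-size record that points to the child sets it is built from, so the set-theoretic operations at each node cost $O(1)$; only the transitive closures $TC(B',d)$ are materialised explicitly, as Boolean matrices of size at most $(k{+}2)\times(k{+}2)$. At each of the $O(n)$ bags the algorithm iterates over the $O(2^k)$ subsets $B'\subseteq B_d$; for each $B'$, checking validity costs $O(k^2)$ by inspecting intra-bag edges, and the dominant operation is a transitive closure on a graph of $O(k)$ vertices, which runs in $O(k^{\kexp})$ time via fast matrix multiplication. Summing gives the claimed $O(n\cdot 2^k\cdot k^{\kexp})$ bound; a final linear-time pass unfolds the implicit representation at the root and prints $A$ explicitly.

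The step I expect to require the most care is not the accounting itself but the verification that every node type (\textbf{leaf}, \textbf{introduce}, \textbf{join}, \textbf{forget}) fits within the $O(k^{\kexp})$ per-subset budget. Each of the four cases has to be reduced to a single transitive-closure computation on an $O(k)$-vertex graph — in particular, the forget step must read off the existence of an out-edge from the forgotten vertex in the closure of the child (rather than recomputing anything path-like on $G_d$), and the join step must combine two closures by a single additional closure operation. Once this uniformity is confirmed, the time bound follows immediately.
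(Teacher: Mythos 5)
Your proposal matches the paper's own argument essentially step for step: correctness via Lemma~\ref{lemma:p_is_maximal_asr} applied at the root (with Lemma~\ref{lemma:p_is_vps} guaranteeing the $TC$ tables are sound), and the time bound from $O(n)$ nodes times $O(2^k)$ subsets times an $O(k^{\kexp})$ transitive closure on a $(k{+}2)\times(k{+}2)$ matrix, with $P(\cdot,\cdot)$ stored implicitly and unfolded in a final linear pass. No gaps; this is the paper's proof.
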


\subsection{\mec decomposition}
The algorithm is similar to the one for the \asr set in that it is also based on dynamic programming on a tree decomposition.
Again, we assume that we have a nice tree decomposition with a bag of size $1$ in the root.
This time we obviously do not add the target vertex to every bag, as there is no distinguished vertex.

As in the previous algorithm, we define a partial solution for a node $d$ to be a subset of $V(G_d)$.
This subset consists of vertices that are to form a single MEC.
A partial solution $P$ is valid, if three conditions hold.

\begin{enumerate}
\item For every $v \in P \cap V_P$ and every edge $vu \in E(G_d)$, we have $u \in P$.
\item For every $v \in P$ there exists a path in $P$ from $v$ to some vertex in $P \cap B_d$.
\item For every $v \in P$ there exists a path in $P$ from some vertex in $P \cap B_d$ to $v$.
\end{enumerate}

Note that the only difference from the algorithm for \asr set is that we have added the third condition.
As a result we can use the dynamic programming scheme from the previous section, with only a slight change.
When we perform a check that depends on the second condition (while processing a \textbf{forget} node), we need to run two symmetric checks instead of one.
Let $P(B', d)$ denote the maximal partial solution for $d$ such that $P(B', d) \cap B_d = B'$.

We use the following two lemmas to show the correctness of the algorithm.
Their proofs can be obtained easily from the proofs of their analogous lemmas in the previous section.

\begin{lemma}
For each node $d$ and each valid subset $B' \subseteq B_d$, $P(B', d)$ is a valid partial solution and $TC(B', d)$ is computed correctly.
\end{lemma}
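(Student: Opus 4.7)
The plan is to mirror the induction used in Lemma~\ref{lemma:p_is_vps}, doing bottom-up induction on the subtree of the nice tree decomposition rooted at $d$, with a case split on the type of node. The only novelty relative to the \asr proof is the third (backward-reachability) condition, which I would verify in lockstep with condition~2 throughout the induction; condition~1 and the correctness of $TC(B', d)$ carry over essentially unchanged.

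For the \textbf{leaf} node, $P(B', d)$ is a subset of the single vertex in $B_d$, so all three conditions hold vacuously and $TC(B', d)$ is trivial. For the \textbf{introduce} step with introduced vertex $w$, if $w \notin B'$ both claims follow directly from the induction hypothesis applied to the child. If $w \in B'$, then condition~1 is exactly what the validity of $B'$ enforces on $w$'s out-edges (when $w \in V_P$); conditions~2 and~3 are immediate for $w$ (it already lies in $B_d$) and inherited from the child for the remaining vertices, and $TC(B', d)$ is correct because it is obtained by adding to $TC(B' \setminus \{w\}, c)$ exactly the edges of $G$ incident to $w$ inside $B'$ and then taking the closure. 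For the \textbf{join} step, taking the union $P(B', c_1) \cup P(B', c_2)$ preserves condition~1 by Lemma~\ref{lemma:separator} (no new $V_P$-out-edge appears between vertices that are separated by $B_d$), and both reachability conditions are preserved because the certifying paths from either child survive in the union. The update $TC(B', d) = (TC(B', c_1) \cup TC(B', c_2))^*$ is correct because any walk between two vertices of $B'$ in the union can be decomposed into subwalks that each stay inside one $V(G_{c_i})$ between consecutive visits to $B_d$, and each such subwalk is captured by one of the child transitive closures.

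The \textbf{forget} step is where the difference from the \asr algorithm shows up and where I expect the main care is required. Let $w$ be the forgotten vertex, so $B_c = B_d \cup \{w\}$. To decide whether $w$ can be incorporated into $P(B', d)$, the algorithm must now perform two symmetric tests on $TC(B' \cup \{w\}, c)$: whether $w$ has an out-neighbor in $B'$ (certifying a path from $w$ forward into $B'$ inside the partial solution, as required by condition~2) \emph{and} whether $w$ has an in-neighbor in $B'$ (certifying a path from $B'$ back to $w$, as required by the new condition~3). Only if both tests succeed do we set $P(B', d) = P(B' \cup \{w\}, c)$ and $TC(B', d) = TC(B' \cup \{w\}, c) \cap (B' \times B')$; otherwise we inherit from the child with $B'$ itself. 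Condition~1 is preserved because no out-edge of a $V_P$-vertex leaves $V(G_c)$ except through $B_c$, by Lemma~\ref{lemma:separator}, and the inductive correctness of $TC$ at the child together with the separator property ensures that the in- and out-neighbor lookups for $w$ really do correspond to the existence of the required paths in $P(B' \cup \{w\}, c)$.

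The main obstacle is the forget step, precisely because of the need for the both-sided reachability checks: one must argue that a constant-time lookup in $TC(B' \cup \{w\}, c)$ is equivalent to the existence of the corresponding walks in $P(B' \cup \{w\}, c)$, in each direction. Once this equivalence is established from the inductive correctness of $TC$ at the child, the remaining bookkeeping is a straightforward symmetrization of the proof of Lemma~\ref{lemma:p_is_vps}.
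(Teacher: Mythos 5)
Your proposal is correct and matches the paper's intent exactly: the paper proves this lemma by declaring it an easy adaptation of the proof of Lemma~\ref{lemma:p_is_vps}, and your argument is precisely that adaptation, carrying the induction over node types unchanged and symmetrizing the \textbf{forget} step with the additional in-neighbor test in $TC(B' \cup \{w\}, c)$ to certify the new backward-reachability condition. Your identification of the forget case as the only point of genuine difference is also where the paper itself locates the change ("two symmetric checks instead of one").
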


\begin{lemma}
\label{lemma:mec_intersection}
For every node $d$ and \mec $M$ such that $M \cap B_d \neq \emptyset$, we have $P(M \cap B_d, d) = M \cap V(G_d)$.
\end{lemma}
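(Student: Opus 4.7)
The plan is to mirror the proof of Lemma~\ref{lemma:p_is_maximal_asr} by induction on the depth of the subtree rooted at $d$, with a case analysis over the four node types of the nice tree decomposition, inserting the adjustments needed by the MEC setting. First I would verify that $M \cap B_d$ is a valid subset of $B_d$ in the algorithm's sense: closure under $V_P$-out-edges inside $B_d$ follows from the analogous closure property of the \mec $M$. Next I would argue that $M \cap V(G_d)$ is itself a valid partial solution whose intersection with $B_d$ equals $M \cap B_d$. The $V_P$-closure is inherited from $M$. For the two bag-reachability conditions, fix $v \in M \cap V(G_d)$; since $M$ is strongly connected, there is a path in $M$ from $v$ to any other vertex of $M$, and another from any such vertex into $v$. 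Either the path stays inside $V(G_d)$ and ends at some vertex of $M \cap B_d$, or it exits $V(G_d)$, in which case Lemma~\ref{lemma:separator} forces its first exit point to lie in $B_d$; the prefix witnesses the required path in $M \cap V(G_d)$ ending in $M \cap B_d$, and the incoming condition is symmetric. Maximality of $P$ then gives $M \cap V(G_d) \subseteq P(M \cap B_d, d)$.

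For the reverse inclusion, the leaf, join, and introduce cases carry over essentially verbatim from the ASR argument, with $M \cap B_d$ playing the role previously played by $\{s\}$. The main obstacle is the forget case and the backward direction of its augmented symmetric check. In the forward direction, if the forgotten vertex $w$ lies in $M$, then strong connectivity of $M$ together with Lemma~\ref{lemma:separator} supplies both an outgoing and an incoming arc for $w$ in $TC(M \cap B_c, c)$, matching the algorithm's two-sided criterion. Conversely, suppose both symmetric checks pass at $B' = M \cap B_d$ (with child bag $B_c = B_d \cup \{w\}$) but $w \notin M$. Following the ASR template, I would then form $M' = M \cup P((M \cap B_d) \cup \{w\}, c)$ and contradict the maximality of $M$ by exhibiting an end-component strictly containing $M$ inside $M'$. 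Closure under $V_P$-out-edges follows from closure of $M$ and of the partial solution, together with the observation that every neighbor of $w$ already lies in $V(G_c)$ (since the bags containing $w$ form a connected subtree rooted at or below $c$). Strong connectivity follows because $(M \cap B_c) \setminus \{w\} = M \cap B_d$, non-empty by hypothesis, acts as a hub: every vertex added from $P$ reaches and is reached by this hub inside $P$ via the two-sided reachability of the partial solution, and $M$ is already strongly connected, so the two pieces glue along the hub into a strongly connected set.

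The hardest step is this strong-connectivity argument for $M'$ in the backward forget case: the ASR analog needs only reachability to the single target $s$, whereas here every pair of vertices in the augmented set must be routed in both directions. The key handle is that the MEC version of the partial-solution conditions supplies two-sided reachability between $P$ and $B_c$, which lets $M \cap B_c$ serve as the bridge gluing $M$ to the vertices newly added from $P$; this, combined with the observation that the symmetric check explicitly certifies that $w$ itself reaches and is reached by $M \cap B_d$ within $P$, closes the loop.
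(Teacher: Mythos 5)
Your proposal is correct and matches the paper's approach exactly: the paper gives no standalone proof of this lemma, stating only that it is obtained by adapting the proof of Lemma~\ref{lemma:p_is_maximal_asr}, and your adaptation --- in particular the two-sided forget-case argument that glues $M$ to the partial solution along the hub $M \cap B_d$ to build a strictly larger end-component and contradict maximality --- is the intended one. The only detail worth adding is the degenerate introduce case where $M \cap B_d = \{w\}$ consists solely of the introduced vertex, so the induction hypothesis cannot be invoked at the child (since $M \cap B_c = \emptyset$); there one argues directly that $M \cap V(G_c) = \emptyset$ via Lemma~\ref{lemma:separator} and that $P(\emptyset, c) = \emptyset$ by the reachability conditions of validity --- a case that does not arise in the ASR setting because $s$ belongs to every bag.
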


The difference in this algorithm is in obtaining the result after the dynamic programming step is finished.
First, we find the rootmost (that is, the one closest to the root) node $d_1$ and a vertex $v_1 \in B_{d_1}$, such that $P(\{v_1\}, d_1) \neq \bot$.
In case of a tie, we can choose any node.
We claim that $M_1 = P(\{v_1\}, d_1)$ is a \mec.
We repeat this procedure, without taking into account vertices from $M_1$.
This process is continued, as long as a feasible node and vertex can be found.
We now show that it is correct.

\begin{lemma}
For each node $d$ and $v \in B_d$, if $P(\{v\}, d) \neq \bot$, then $P(\{v\}, d)$ is an end-component of $G$.
\end{lemma}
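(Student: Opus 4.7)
The plan is to verify the two defining properties of an end-component for $P(\{v\},d)$: strong connectivity of the induced subgraph, and closure under probabilistic out-edges in $G$.

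First I will invoke the first lemma of this subsection (the MEC analogue of Lemma~\ref{lemma:p_is_vps}), which asserts that $P(\{v\},d)$ is a valid partial solution. Since $P(\{v\},d) \cap B_d = \{v\}$, the second and third validity conditions imply that every vertex of $P(\{v\},d)$ has both an incoming and an outgoing path to $v$ that stays inside $P(\{v\},d)$. Concatenating the two witnessing paths for any pair of vertices in $P(\{v\},d)$ exhibits strong connectivity of $G[P(\{v\},d)]$.

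Next I will prove closure. Let $u \in P(\{v\},d) \cap V_P$ and $uw \in E(G)$. If $u \in V(G_d) \setminus B_d$, then by property~(3) of the tree decomposition the bags containing $u$ form a connected subtree that cannot include the node of $B_d$ (else $u \in B_d$), so every bag containing $u$ lies strictly in the subtree of~$d$; hence $w \in V(G_d)$ and $uw \in E(G_d)$, and the first validity condition forces $w \in P(\{v\},d)$. Otherwise $u \in B_d \cap P(\{v\},d) = \{v\}$, so $u = v$. The sub-case $w \in V(G_d)$ again follows from the first validity condition applied to the edge $vw \in E(G_d)$.

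The main obstacle is the remaining sub-case $u = v$ and $w \notin V(G_d)$. Here the edge $vw$ would be covered by a bag strictly above~$d$, forcing $v$ to appear in such a bag, so $d$ would not be the topmost bag containing~$v$. To rule this out I will use the rootmost choice of $d$ made in the extraction procedure stated just above the lemma: a short case analysis on the type of the parent of~$d$ in the nice tree decomposition (introduce, forget, or join) shows that whenever $v \in B_{\mathrm{parent}(d)}$, non-$\bot$-ness of $P(\{v\},d)$ propagates to $P(\{v\},\mathrm{parent}(d))$, contradicting rootmostness. The introduce case is immediate since $P(\{v\},\mathrm{parent}(d)) = P(\{v\},d)$; the forget case uses the copy branch when the forgotten vertex cannot be included together with $v$; the join case uses that both children of a join share the parent's bag. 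Hence $v$ appears in no bag above~$d$, every edge incident to~$v$ lies in $E(G_d)$, and closure holds also for $v$, completing the proof.
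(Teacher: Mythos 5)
The first half of your argument (strong connectivity) is exactly the paper's: validity conditions (2) and (3) together with $P(\{v\},d)\cap B_d=\{v\}$ give a path from every vertex of $P(\{v\},d)$ to $v$ and from $v$ to every vertex, hence strong connectivity of the induced subgraph. For the closure property you go further than the paper does: you correctly note that validity condition (1) only quantifies over $E(G_d)$, that this suffices for every $u\in P(\{v\},d)\setminus B_d$ (all incident edges of such a $u$ lie in $E(G_d)$, by connectivity of the set of bags containing $u$), and that the only genuinely problematic case is an edge $vw\in E(G)$ with $v\in V_P$ and $w\notin V(G_d)$. The paper's own two-line proof silently elides this case by asserting closure over all of $E$ ``from the definition,'' even though the definition only yields closure over $E(G_d)$.

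Your repair of that remaining case, however, does not go through. The claim that non-$\bot$-ness of $P(\{v\},d)$ propagates to the parent whenever $v\in B_{\mathrm{parent}(d)}$ fails in two of your three cases. At an \textbf{introduce} node that introduces an out-neighbour $w$ of $v$ with $v\in V_P$, the singleton $\{v\}$ ceases to be a \emph{valid subset} of the parent bag (the edge $vw$ now lies inside $B_{\mathrm{parent}(d)}\times B_{\mathrm{parent}(d)}$ and forces $w\in B'$), so $P(\{v\},\mathrm{parent}(d))=\bot$ while $P(\{v\},d)\neq\bot$; this is precisely the situation in which $v$ has an edge leaving $V(G_d)$, i.e.\ the case you are trying to exclude. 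At a \textbf{join} node, $P(\{v\},\mathrm{parent}(d))=P(\{v\},d)\cup P(\{v\},\mathrm{sibling})$, which is $\bot$ whenever the sibling's value is $\bot$, since $\bot$ is propagating. Concretely, take $a,b\in V_P$ with edges $ab$, $ba$, $bc$ and the path decomposition with bags $\{a,b\}$, $\{b\}$, $\{b,c\}$, $\{c\}$ rooted at $\{c\}$: the node $d$ with bag $\{b\}$ satisfies $P(\{b\},d)=\{a,b\}\neq\bot$ and is rootmost for $b$ (its parent introduces $c$ and invalidates the subset $\{b\}$), yet $\{a,b\}$ is not closed under the edge $bc$. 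So rootmostness of $d$ does not imply that $v$ occurs in no bag above $d$, and the case you rightly flagged as the main obstacle remains open. Note also that the lemma as stated carries no rootmostness hypothesis at all, so your argument is in any event proving a contextualized variant rather than the statement itself. To your credit, you isolated a real subtlety that the paper's proof does not address; but your resolution of it is not correct.
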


\begin{proof}
From the definition of $P(\cdot, \cdot)$, we have that for every $u \in P(\{v\}, d) \cap V_P$ and every $ux \in E$, it holds that $x \in P(\{v\}, d)$.
Moreover, from each vertex of $P(\{v\}, d)$ there is a path to $v$ and from $v$ there is a path to each vertex of $P(\{v\}, d)$.
It follows that there is a path between any pair of vertices in $P(\{v\}, d)$, so it is a strongly connected set in $G$, thus also an end-component.
\qed
\end{proof}

This implies that our algorithm finds a collection of end-components.
We now show that each such end-component is a MEC.
Let $M$ be an arbitrary \mec and let $d$ be the rootmost node, such that $B_d \cap M \neq \emptyset$.
Since the tree decomposition is nice, $B_d \cap M$ contains a single vertex $v$.
From Lemma~\ref{lemma:mec_intersection} it follows that $M = P(\{v\}, d)$.
It is easy to see that when the algorithm picks a first vertex from $M$, it picks the vertex $v$ defined above, and thus finds a \mec $M$.
It follows easily that every \mec is eventually found by the algorithm.

Let us now discuss the running time.
As before, the dynamic programming step requires $O(n\cdot 2^k\cdot k^{\kexp})$ time.
Retrieving all MECs from their implicit representations requires time that is bounded by the total time of building these representations.
Moreover, the process of finding rootmost nodes requires time that is linear in the size of the tree decomposition.
Hence, the running time is bounded by the time of the dynamic programming and amounts to $O(n\cdot 2^k\cdot k^{\kexp})$.

\begin{theorem}
Given an MDP and the tree decomposition of width $k$ of the MDP graph, the \mec decomposition can be computed in 
$O(n\cdot 2^k\cdot k^{\kexp})$ time, where $n$ is the number of states (vertices).
\end{theorem}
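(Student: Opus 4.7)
The plan is to bundle together the correctness of the dynamic program with the correctness of the rootmost-extraction loop, and then read off the running time from the per-bag work exactly as in the \asr case. First I would invoke the two lemmas already stated: every non-$\bot$ value $P(\{v\},d)$ is an end-component of $G$, and for every \mec $M$ meeting $B_d$ we have $P(M \cap B_d, d) = M \cap V(G_d)$ (Lemma~\ref{lemma:mec_intersection}). Taken together these give one half of the correspondence --- every MEC is represented somewhere in the table --- and supply that each output of the extraction is genuinely an end-component.

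Next I would argue that the rootmost-scan enumerates each MEC exactly once. Fix a \mec $M$. Because the decomposition is nice with a singleton root bag, bag sizes change by at most one along the root-to-leaf path, so there is a unique rootmost node $d_M$ with $B_{d_M} \cap M \neq \emptyset$ and this intersection is a single vertex $\{v_M\}$; by Lemma~\ref{lemma:mec_intersection}, $M = P(\{v_M\}, d_M)$. To justify the ``ignore vertices from previously found MECs'' step, I would use the standard fact that two distinct MECs are vertex-disjoint (the union of overlapping end-components is itself an end-component, contradicting maximality), so the table entries corresponding to later MECs are untouched by the removals. Maximality of each produced end-component follows: if some $P(\{v_1\}, d_1)$ were strictly contained in a \mec $M'$, then Lemma~\ref{lemma:mec_intersection} applied to $M'$ would locate $M'$ at a node at least as close to the root, contradicting the rootmost choice.

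For the running time I would follow the \asr analysis essentially verbatim. Each of the $O(n)$ nodes of the nice decomposition (Lemma~\ref{lemma:nice_decomposition}) produces $O(2^k)$ valid subsets $B'$; validity is checked in $O(k^2)$ time by inspecting the edges inside a bag, and the transitive closure $TC(B',d)$ on at most $k+1$ vertices is updated in $O(k^{\kexp})$ time via fast matrix multiplication. The additional symmetric reachability check at \textbf{forget} nodes (needed for the third validity condition) only doubles the constant. Storing $P(\cdot,\cdot)$ implicitly keeps per-entry space constant, so explicitly reconstructing the MECs from these representations costs at most the DP time, and the rootmost search is linear in the tree size; the total is $O(n \cdot 2^k \cdot k^{\kexp})$.

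The main obstacle I anticipate is the extraction argument: one must simultaneously show that every \mec surfaces as a rootmost pair $(d_M, v_M)$, that the associated $P(\{v_M\},d_M)$ is exactly $M$ (rather than a proper sub-end-component), and that deleting already-reported vertices does not corrupt subsequent iterations. Once disjointness of MECs and the singleton-intersection property at $d_M$ are in hand, this reduces to a direct application of Lemma~\ref{lemma:mec_intersection}; the remaining DP correctness and cost bookkeeping transfer from the \asr algorithm with only cosmetic changes.
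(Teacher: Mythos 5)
Your proposal is correct and follows essentially the same route as the paper: the same two lemmas (every non-$\bot$ entry $P(\{v\},d)$ is an end-component, and Lemma~\ref{lemma:mec_intersection}), the same rootmost-extraction argument using the singleton intersection $B_{d_M}\cap M=\{v_M\}$ guaranteed by niceness, and the same running-time accounting as in the \asr case. The extra details you supply (pairwise disjointness of MECs to justify ignoring already-reported vertices, and the contradiction argument for maximality) are points the paper leaves implicit, but they do not change the approach.
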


\section{Static and Decremental Algorithms for \mec decomposition and Almost-sure Reachability}
\label{sec:BasicAlgorithms}
In this section we will present the $O(m \cdot k \cdot \log n)$-time static 
algorithms for the \mec decomposition and the \asr set computation,
and the decremental algorithms.
The key would be to present two simple algorithms for the problems that we will view as
decremental graph algorithmic problems (decremental scc computation for \mec decomposition,
and decremental directed reachability for \asr computation).
We will then use dynamic graph algorithmic techniques to obtain the desired result.
We start with the two basic algorithms.
The most straightforward implementations of both these algorithms are not efficient, but we later show that they can be speeded up significantly 
for graphs with low treewidth using dynamic graph algorithmic techniques.

\subsection{Basic algorithms}\label{subsec:basic}

\subsubsection{\mec decomposition.}
We first give an algorithm (formal description as 
Algorithm~\ref{algorithm-mec}) for computing \mec decomposition.
Here, $\textsc{ComputeSccs}$ denotes a function, which computes an array 
$SCC$ that maps the vertices $v$ 
into unique identifiers $SCC[v]$ of the strongly connected components in the graph.

\begin{algorithm}
\caption{\textsc{Mec}(G)}
\label{algorithm-mec}
\begin{algorithmic}[1]
\State $G' := G$
\State $SCC := \textsc{ComputeSccs}(G')$
\While{$\exists_{u \in V_P \cap V(G')} \exists_{uv \in E(G)} SCC[u] \neq SCC[v]$}
	\State{remove $u$ from $G'$}
	\State $SCC := \textsc{ComputeSccs}(G')$
\EndWhile
\end{algorithmic}
\end{algorithm}

\begin{lemma}
\label{lemma:algorithm-mec-correct}
Algorithm \ref{algorithm-mec} is correct.
\end{lemma}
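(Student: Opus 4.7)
The plan is to prove correctness via two complementary claims: (S) when the loop terminates, every SCC of $G'$ is an end-component of $G$; and (C) for every MEC $M$ of $G$, no vertex of $M$ is ever removed and $M$ stays contained in a single SCC of $G'$ throughout the execution. Termination itself is immediate since each iteration of the while loop deletes one vertex, so the algorithm stops in at most $n$ rounds.

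For (S), observe that the loop guard is a direct negation of the end-component condition: at termination, for every $u \in V_P \cap V(G')$ and every edge $uv \in E(G)$ the endpoint $v$ lies in $V(G')$ and in the same SCC of $G'$ as $u$ (interpreting the guard so that an edge leaving $V(G')$ also triggers removal). Hence every SCC $S$ of the terminal $G'$ is strongly connected by definition and closed under probabilistic out-edges in $G$, so $S$ is an end-component of $G$.

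For (C), I would proceed by induction on the iteration count. Initially $G' = G$ and each MEC $M$, being strongly connected in $G$, lies inside a single SCC. For the inductive step, suppose the invariant holds before an iteration that removes some vertex $u$. The claim is $u \notin M$: indeed, if $u \in M$ then $u \in V_P \cap M$ would have every out-edge $uv \in E(G)$ satisfy $v \in M$ (because $M$ is an end-component), and by induction $M$ lies in a single SCC of $G'$, contradicting the guard that selected $u$. Therefore $M \subseteq V(G')$ still, the subgraph induced by $M$ is unchanged, so $M$ remains strongly connected in the new $G'$ and hence inside one SCC. The key obstacle here is precisely this induction step; once one notices that the end-component closure property of $M$ directly contradicts the selection rule, the argument is clean.

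Combining (S) and (C) finishes the proof. Fix a MEC $M$ of $G$; by (C), $M$ is contained in some SCC $S$ of the final $G'$, and by (S), $S$ is itself an end-component of $G$. Maximality of $M$ then forces $S = M$. Conversely, any SCC $S$ of the terminal $G'$ is an end-component by (S), so it is contained in some MEC $M'$, which by the previous sentence equals an SCC $S'$ containing it; since SCCs are disjoint, $S = S' = M'$. Thus the SCCs of $G'$ at termination are exactly the MECs of $G$, and the remaining vertices (those removed by the algorithm together with vertices in singleton SCCs that are not end-components, which are precisely vertices not in any MEC) form the rest of the MEC decomposition.
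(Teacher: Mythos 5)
Your proposal is correct and follows essentially the same route as the paper's own proof: the paper likewise argues that at termination the SCCs of $G'$ are end-components of $G$ (your claim (S)) and that no removed vertex can belong to any MEC because the closure property of an end-component contradicts the selection rule in the loop guard (your claim (C)). Your version merely makes explicit the induction over iterations and the final maximality argument, which the paper leaves informal.
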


\begin{proof}
The algorithm removes a subset of vertices of $G$, thus obtaining a graph $G'$.
It follows clearly that once the algorithm terminates, the strongly connected components of $G'$ form a \mec decomposition of $G'$.
Moreover, they are end-components in $G$ (note that we use $E(G)$ instead of $E(G')$ in the condition in the third line).
To show that these sets form a \mec decomposition for $G$ (i.e., they are maximal with respect to inclusion), we prove that every vertex $u$ that is removed does not belong to any \mec of $G$.
If $u$ belongs to some \mec $M$, then $v$ must also belong to $M$.
But, by the definition of a strongly connected component, $u$ is not reachable from $v$, so they cannot belong to the same MEC.
Hence, $u$ is not contained in any MEC.
\qed
\end{proof}

\subsubsection{Almost-sure reachability.}
\begin{algorithm}
\caption{\textsc{Asr}(G, s)}
\label{algorithm-asr}
\begin{algorithmic}[1]
\State $G' := G$
\State $A := \textsc{FindReachable}(G', s)$

\While{$\exists_{u \in V_P \cap A} \exists_{uv \in E(G)} v \not\in A$}
	\State{remove $u$ from $G'$}
	\State $A := \textsc{FindReachable}(G', s)$
\EndWhile
\end{algorithmic}
\end{algorithm}
A similar algorithm to the one above can be given for ASR.
Procedure $\textsc{FindReachable}$ computes the set of vertices that are connected to $s$ with a path in $G$.
The formal description is given as Algorithm~\ref{algorithm-asr}.

\begin{lemma}
\label{lemma:algorithm-asr-correct}
Algorithm~\ref{algorithm-asr} is correct.
\end{lemma}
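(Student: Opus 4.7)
The plan is to prove Algorithm~\ref{algorithm-asr} correct in two stages: first, argue that upon termination the returned set $A$ satisfies the global and local conditions that characterize an \asr set; second, argue by induction on the iterations that the maximum \asr set $A^*$ is never destroyed, so $A^* \subseteq A$ at every step, giving maximality.

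For the first stage, I would observe that when the while loop exits, $A$ is the set of vertices in $G'$ from which $s$ is reachable in $G'$. The global condition is immediate: if $u \in A$, then there is a $u$-to-$s$ path $\pi$ in $G'$, and every vertex on $\pi$ also reaches $s$ in $G'$, so $\pi$ lies entirely in $A$. The local condition is precisely the negation of the loop's guard: on termination, there is no $u \in V_P \cap A$ with an edge $uv \in E(G)$ such that $v \notin A$, so $A$ is closed under player-$P$ edges of $G$.

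For the second stage, I would maintain the invariant that at the start of every iteration, $A^* \subseteq V(G')$ and, after the call to $\textsc{FindReachable}$, $A^* \subseteq A$. The base case holds since $G' = G$ and every $v \in A^*$ has a $v$-to-$s$ path inside $A^* \subseteq V(G)$, hence reaches $s$ in $G'$. For the inductive step, suppose the vertex $u$ chosen for removal satisfies $u \in V_P \cap A$ and $uv \in E(G)$ with $v \notin A$. By the inductive hypothesis $A^* \subseteq A$, so $v \notin A^*$; but if $u$ were in $A^*$, the local condition applied to $A^*$ would force $v \in A^*$, a contradiction. Hence $u \notin A^*$, so after deletion $A^* \subseteq V(G') \setminus \{u\}$. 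Recomputing $A$ on the smaller $G'$ preserves $A^* \subseteq A$ because $A^*$ still provides, for each of its vertices, a path to $s$ through vertices of $A^*$, which all survive in the new $G'$.

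Combining the two stages, at termination $A$ satisfies both conditions and contains $A^*$; since $A^*$ is by definition the unique maximum set satisfying both conditions, we get $A = A^*$, which is the \asr set by Lemma~\ref{lemm:equalA}. The main subtlety is the inductive step: one must carefully use that the guard refers to $E(G)$ rather than $E(G')$, since otherwise removing $u$ might silently eliminate an out-edge of some surviving player-$P$ vertex and break the local-closure argument for $A^*$. Writing the invariant as $A^* \subseteq V(G')$ (rather than as a statement about edges) and applying the local condition to $A^*$ directly against $E(G)$ bypasses this pitfall cleanly.
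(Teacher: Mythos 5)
Your proof is correct and follows essentially the same approach as the paper's: show that on termination $A$ satisfies the global condition (reachability is transitive along the witnessing path) and the local condition (negation of the loop guard), and show that no removed vertex can belong to the maximum set $A^*$. Your version makes explicit the induction over iterations (the invariant $A^* \subseteq V(G')$) that the paper's proof leaves implicit, which is a welcome tightening but not a different argument.
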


\begin{proof}
The algorithm removes a subset of vertices of $G$, thus obtaining a graph $G'$.
It follows clearly that once the algorithm terminates, the set of vertices from which there is a path to $s$ is an \asr set in $G'$ that satisfies both global and local conditions.
To show that it is also an \asr set in $G$ (i.e. it is maximal with respect to inclusion), we prove that every vertex $u$ that is removed cannot belong to the \asr set.
If $u$ belonged to the set, then $v$ would also belong to it.
But there is no path from $v$ to $s$ in $G$, so $v$ cannot belong to the \asr set, and neither can $u$.
\qed
\end{proof}


\subsection{Static algorithms for \mec and \asr}
This section describes efficient implementations of algorithms from Section~\ref{subsec:basic} that work for graphs with 
low treewidth.

\smallskip\noindent{\bf \mec decomposition.}
In order to compute \mec decomposition, we need to give an efficient implementation of Algorithm~\ref{algorithm-mec}.
This consists in maintaining the array $SCC$ under a sequence of vertex deletions.
Note that instead of removing vertices, we might well just remove all its incident edges.

To maintain strongly connected components we use a data structure by Łącki~\cite{Lacki11}.
Given the tree decomposition of a graph of width $k$, it can maintain the $SCC$ array subject to edge deletions.
The total running time of all delete operations is $O(m\cdot k \cdot \log n)$, and every query to the array is answered in constant time.
Thus, if $\Omega(m)$ edges are deleted, the amortized time of one update is $O(k \cdot \log n)$.

After each update, if a strongly connected component decomposes into multiple strongly connected components, some edges that used to be contained in a single strongly connected component now connect different strongly connected components.
It is easy to see that it suffices to check the condition from the third line of the algorithm just for these edges.
The algorithm maintaining strongly connected components can be easily extended to report the desired edges with no additional overhead.
This way, we obtain an algorithm that computes the \mec decomposition in $O(m\cdot k \cdot \log n)$ total time.

\smallskip\noindent{\bf Almost-sure reachability.}
We now describe an efficient implementation of Algorithm~\ref{algorithm-asr}.
This time it suffices to give an efficient algorithm that maintains the subset $A \subseteq V$ of vertices, such that for every $r \in A$ there exists an $r$-to-$s$ path in $G$.
After reversing all edges in the graph this becomes a single-source reachability problem.
We show that by modifying the algorithm of Łącki~\cite{Lacki11}, this can be achieved in $O(k \cdot \log n)$ amortized time.
We describe the details of the algorithm below.

\smallskip\noindent{\em Decremental single-source reachability.}
Given a directed graph $G$ with a designated \emph{source} $s \in V(G)$, the goal is to maintain the set of vertices reachable from $s$ when the edges of $G$ are deleted.
Moreover, we assume that we are given the tree decomposition of $G$ of width $k$.

The algorithm is a simplified version of the algorithm for decremental all-pairs reachability by Łącki~\cite{Lacki11}.
The description in~\cite{Lacki11} contains an error in the running time analysis of the all-pairs reachability.
However, the problem disappears, if there is only a single source.

One of the ingredients of the algorithm is an algorithm for decremental single-source reachability in a DAG.
The algorithm is very simple.
In the beginning we delete all vertices that are not reachable from the source.
Then, after an edge is deleted, we delete vertices (different from $s$) whose in-degree is $0$, until all remaining vertices have positive in-degree.
Note that deleting a vertex might decrease the in-degree of other vertices and trigger further deletions.
The correctness of the algorithm follows easily.
Moreover, it can be implemented, so that the total running time is linear in the number of edges of the initial graph.
This is because every edge is examined when its start vertex is deleted and this means that the edge itself also gets deleted.

We can now proceed to the algorithm dealing with the general case.
It maintains the subgraph of the initial graph that is reachable from $s$.
In the description we treat $G$ as a variable denoting this subgraph.
To represent $G$ we store its \emph{condensation} $G_c$, that is the graph obtained from $G$ by contracting all strongly connected components.
It is easy to see that a condensation of an arbitrary graph is acyclic.
Hence, we can use the algorithm given above to maintain it.
On the other hand, to maintain the strongly connected components of $G$, we use the data structure by Łącki~\cite{Lacki11}.

When an edge belonging to the condensation is deleted, we can simply update the condensation DAG, deleting some vertices, if necessary.
All other edges are contained inside strongly connected components, so the deletion is handled by the data structure.
This might cause some strongly connected component to break.
In such case the data structure can report the condensation of the subgraph obtained from breaking the component with no additional overhead.
This subgraph is then planted in place of the appropriate vertex in the condensation.
The details are given in~\cite{Lacki11}.

The total running time of processing all edge deletions is $O(m \cdot k \cdot \log n)$ and the set of reachable vertices is maintained explicitly.
Also recall that for treewidth $k$ we have $m=O(n \cdot k)$.

\begin{theorem}
Given an MDP and its tree decomposition of width $k$, the \mec decomposition and the \asr set can be computed in time 
$O(m\cdot k\cdot \log n)$, where $n$ is the number of states (vertices) and $m$ is the number of edges.
\end{theorem}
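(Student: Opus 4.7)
The plan is to prove the two bounds separately, since correctness of the overall procedures has already been established by Lemma~\ref{lemma:algorithm-mec-correct} and Lemma~\ref{lemma:algorithm-asr-correct}. What remains is to bound, in each case, the total work spent across the entire sequence of vertex (equivalently, edge) deletions driven by the basic algorithms of Section~\ref{subsec:basic}, when they are implemented on top of Łącki's decremental data structure~\cite{Lacki11} tuned to the tree decomposition of width $k$.

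For \mec decomposition, I would instantiate Algorithm~\ref{algorithm-mec} directly on top of Łącki's decremental SCC structure, which on input a tree decomposition of width $k$ supports an arbitrary sequence of edge deletions in $O(m \cdot k \cdot \log n)$ total time with $O(1)$ per SCC-identifier query. To avoid scanning all edges after every deletion, I would extend the structure so that, whenever an SCC splits, it reports the list of edges whose endpoints now lie in distinct SCCs; these are precisely the edges that can witness the condition in the main loop, and each such edge is reported at most once across the execution, so the reporting fits inside the same $O(m \cdot k \cdot \log n)$ budget.

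For \asr, I would first reverse all edges so that the problem becomes decremental single-source reachability from $s$, then use the two-level structure described in the excerpt: the reachable subgraph $G$ is represented by its condensation $G_c$ together with an inner decremental SCC structure on each component. A deletion of an edge crossing two SCCs is handled in $G_c$ by a cascading in-degree-zero peeling from its head; since each DAG edge is examined only when its tail is removed (after which the edge itself vanishes), the DAG level contributes only linear work in the number of DAG edges ever created. A deletion of an edge interior to an SCC is forwarded to Łącki's structure; when that SCC shatters, the structure returns the condensation of the resulting subgraph, which is then spliced in place of the broken super-vertex in $G_c$ and peeling is resumed. The invariant that the stored condensation equals the condensation of the subgraph of $G$ reachable from $s$ yields correctness, and the total cost is dominated by Łącki's structure, giving $O(m \cdot k \cdot \log n)$.

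The step I expect to be the most delicate is the splicing inside the \asr implementation: when an SCC breaks, the sub-DAG returned by the inner structure has to be inserted into $G_c$ in a way that preserves the peeling invariant, and sub-components that end up without any incoming DAG edge from outside must be detected and recursively removed from the reachable set. The amortized accounting in~\cite{Lacki11} already charges this work to the edge deletions that triggered the split, so no asymptotic slowdown occurs, but checking that the bookkeeping survives the simplification to the single-source setting (and that every deleted edge is charged only once across both levels) is the place where a careless analysis would introduce extra factors.
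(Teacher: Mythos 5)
Your proposal matches the paper's proof essentially step for step: Algorithm~\ref{algorithm-mec} is run on top of Łącki's decremental SCC structure extended to report edges that newly cross component boundaries (each reported once), and the \asr computation is reduced, after edge reversal, to decremental single-source reachability maintained via the condensation DAG with in-degree-zero peeling plus the inner SCC structure, with broken components spliced back into the condensation. The approach and the charging arguments are the same as in the paper, so the proposal is correct.
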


\subsection{Decremental algorithms}
Both algorithms that we have described can be easily extended to decremental algorithms that support edge deletions.
However, only deleting edges $uv \in E$ such that $u \in V_1$ is allowed.
This assures that the \asr set can only shrink and that every end-component in the \mec decomposition is a subset of a some end-component from the graph before the deletion.

\smallskip\noindent{\bf Almost-sure reachability.}
The algorithm first runs Algorithm~\ref{algorithm-asr} during the initialization phase and computes the initial set $A$.
The set $A$ is maintained by a single-source decremental reachability algorithm.
The very same high-level algorithm can be used to update the set $A$ after an edge is deleted.
We run this algorithm whenever an edge is deleted.
Observe that if we detect that $A$ shrinks, i.e. a subset $U \subseteq A$ of vertices is removed from $A$, we need to check the condition in the third line only for edges that are entering this set.
Thus, each edge is inspected at most once during the entire course of the algorithm.
Hence, the dominating operation is the running time of the decremental single-source reachability algorithm, which requires $O(m \cdot k\cdot \log n)$ time over all deletions or 
$O(k \cdot \log n)$ amortized time for a single deletion, if $\Omega(m)$ edges are deleted.
The proof of correctness is analogous to the one in Lemma~\ref{lemma:algorithm-asr-correct}.

\smallskip\noindent{\bf \mec decomposition.}
We use the same idea as for the decremental algorithm for the \asr set.
In this case Algorithm~\ref{algorithm-mec} can be used both for the initialization and after an edge is deleted.
By maintaining the array $SCC$ with a data structure for decremental SCC maintenance, we get that the amortized time of processing a single update is $O(k \cdot \log n)$.

\begin{theorem}
Given an MDP and its tree decomposition of width $k$, the \mec decomposition and the \asr set can be computed under 
the deletion of $\Omega(m)$ player-1 edges, in amortized time $O(k\cdot \log n)$ per edge deletion,
where $n$ is the number of states (vertices) and $m$ is the number of edges.
\end{theorem}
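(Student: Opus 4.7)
The plan is to bootstrap both decremental algorithms directly on top of the static implementations from the previous subsection, using the decremental data structures of Łącki~\cite{Lacki11} to support the player-1 edge deletions. The crucial observation that keeps the amortized cost polylogarithmic is that the \emph{monotone} behavior of the problem (the \asr set and every end-component can only shrink under player-1 edge deletions) allows the bookkeeping done by Algorithm~\ref{algorithm-asr} and Algorithm~\ref{algorithm-mec} to be charged to the total initial size of the graph rather than re-done after each update.

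For the \asr set I would proceed as follows. During the initialization phase I run Algorithm~\ref{algorithm-asr}, maintaining the reachable set $A$ with the decremental single-source reachability structure of Łącki, whose total work over all deletions is $O(m\cdot k\cdot\log n)$. When a player-1 edge $uv$ is deleted I feed that deletion to the structure, which returns (explicitly) the set $U\subseteq A$ of vertices that have just fallen out of $A$. By Lemma~\ref{lemma:algorithm-asr-correct}, the only places where the while-loop condition can become satisfied are at edges $xy\in E(G)$ with $x\in V_P$ and $y\in U$; I would scan exactly these incoming edges and, for each violator $x$, delete all edges incident to $x$ in the reachability structure, thereby potentially shrinking $A$ further and triggering further violators. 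Since an edge can only be scanned when its endpoint first enters $V\setminus A$, each edge is inspected $O(1)$ times over the whole sequence, so the total extra work is $O(m)$, dominated by the $O(m\cdot k\cdot\log n)$ cost of the reachability data structure. Dividing by the $\Omega(m)$ deletions yields the $O(k\cdot\log n)$ amortized bound.

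For the \mec decomposition the argument is analogous but uses the decremental SCC structure of~\cite{Lacki11} in place of reachability. Initialization runs Algorithm~\ref{algorithm-mec}, which invokes the SCC structure once per vertex removal; subsequently, when a player-1 edge is deleted, the structure reports exactly those edges that used to lie inside a single SCC but now cross SCC boundaries. As noted after Algorithm~\ref{algorithm-mec}, these are the only edges that can newly satisfy the while-loop condition, so scanning them suffices to detect all violators, and each violator $u\in V_P$ is then ``removed'' from $G'$ by deleting its incident edges in the SCC structure. Correctness of both the initialization and the updates is inherited verbatim from Lemmas~\ref{lemma:algorithm-mec-correct} and~\ref{lemma:algorithm-asr-correct}, since in both cases the procedure terminates exactly when $G'$ has no violating edges, and the underlying data structure maintains the appropriate combinatorial object on $G'$.

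The main obstacle is the amortized accounting: one has to argue that the overhead on top of the Łącki structures is linear in $m$ over the whole sequence. The right way to phrase this is a simple potential/charging argument in which each directed edge of $G$ is charged at most once across all updates: an edge $xy$ is examined only when its head first leaves $A$ (respectively, when its endpoints first land in different SCCs), and from that moment on $y$ never re-enters $A$ (respectively, the two SCCs never re-merge) because the sequence is decremental. Combined with the $O(m\cdot k\cdot\log n)$ total cost of the data-structure operations and the hypothesis that $\Omega(m)$ deletions occur, this yields the claimed $O(k\cdot\log n)$ amortized bound per deletion for both problems.
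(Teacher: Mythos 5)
Your proposal is correct and follows essentially the same route as the paper: initialize with Algorithms~\ref{algorithm-asr} and~\ref{algorithm-mec}, maintain $A$ and the $SCC$ array with Łącki's decremental reachability/SCC structures, re-check the while-loop condition only on edges entering the newly removed set (resp.\ edges newly crossing SCC boundaries), and charge each edge once over the whole sequence so the overhead is $O(m)$ on top of the $O(m\cdot k\cdot\log n)$ data-structure cost. The paper's own argument is exactly this, stated more tersely.
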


\noindent{\em Concluding remarks.} 
In this work, we presented faster static and decremental algorithms for 
two core algorithmic problems for MDPs when the treewidth is low.
An interesting question for future work is whether the algorithms
can be extended to MDPs with low DAG-width (as done for parity 
games in~\cite{BDHKO12}).

{\scriptsize 
\smallskip\noindent{\bf Acknowledgements.} The authors would like to thank Monika 
Henzinger for several interesting discussions on related topics.
The research was supported by FWF Grant No P 23499-N23,  
FWF NFN Grant No S11407-N23 (RiSE), ERC Start grant (279307: Graph Games), 
and Microsoft faculty fellows award.
Jakub Łącki is a recipient of the Google Europe Fellowship in Graph Algorithms, and this research is supported in part by this Google Fellowship.
}

\clearpage

\appendix
\section*{Appendix}

\section{Proof of Lemma~\ref{lemm:equalA}}
We prove inclusion in both directions:
\begin{itemize}
\item Every vertex in $A$ must have a path with vertices in $A$ to $s$ in the 
graph: to ensure almost-sure reachability, simple graph reachability must be 
ensured, and the almost-sure set should never be left.
Thus the global condition is satisfied by $A$. 
Since from vertices outside $A$, almost-sure reachability cannot be ensured,
vertices $u \in A \cap V_P$ must have their out-going edges in $A$, as otherwise
the set $A$ is left with positive probability and from the remaining vertices
almost-sure reachability cannot be ensured. 
It follows that $A$ satisfies both the global and the local conditions, and since $A^*$ is
the maximum such set we have $A \subseteq A^*$.

\item We now argue that from every vertex in $A^*$, almost-sure reachability to $s$ is ensured.
By the global condition, every vertex in $A^*$ has a path to $s$ consisting of vertices
in $A^*$, and thus have an edge to a vertex in $A^*$ that is closer 
(in terms of shortest path) to $s$.  
From every vertex in $A^* \cap V_1$ choose the first edge on the shortest path (inside $A^*$) to $s$. 
Consider the resulting Markov chain obtained for the
set $A^*$ of vertices: then the vertex $s$ is the only recurrent state and thus reached with
probability~1. Hence $A^* \subseteq A$.
\end{itemize}

\section{Proof of Lemma~\ref{lemma:p_is_vps}}
The proof proceeds by the induction on the depth of the subtree rooted at $d$.
We consider each node type separately.
The first type corresponds to the basis of the induction.
Since we assume that $P(B', d) \neq \bot$ and $\bot$ is propagating, we immediately have that all values of $P(\cdot, \cdot)$ we refer to are not equal to $\bot$.

\begin{itemize}
\item \textbf{Leaf} The claim follows trivially.
\item \textbf{Join} $P(B', d)$ is a valid partial solution, as it is a sum of two valid partial solutions.
To show that the transitive closure is computed correctly, we show that every path connecting vertices of $B'$ and going through $P(B', d)$ can be traced in $TC(B', c_1) \cup TC(B', c_2)$.
Consider two vertices of $u, v \in B'$ that are connected with a directed path $\rho$ in $P(B', d)$.
Let us split $\rho$ into subpaths by cutting it in each vertex contained in $B'$.
Denote the resulting subpaths $\rho_1, \ldots, \rho_k$.
We want to show that each such subpath can be traced in $TC(B', c_1)$ or $TC(B', c_2)$, which means that it is either contained in $G_{c_1}[P(B', c_1)]$ or $G_{c_2}[P(B', c_2)]$.

Consider a subpath $\rho_i$.
If it is contained within $B'$, it is also contained both in $G_{c_1}[P(B', c_1)]$ and $G_{c_2}[P(B', c_2)]$.
Otherwise, consider the first vertex $v$ outside $B'$ and w.l.o.g. assume that it belongs to $G_{c_1}[P(B', c_1)]$.
By Lemma~\ref{lemma:separator}, any path from $v$ to $G_{c_2}[P(B', c_2)] \setminus B'$ has to go through $B'$.
But, by definition, $\rho_i$ ends at the first vertex of $B'$ encountered after $v$.
Thus, $\rho_i$ is contained in $G_{c_1}[P(B', c_1)]$.

This shows that $TC(B', d) \subseteq (TC(B', c_1) \cup TC(B', c_2))^{*}$.
The reverse inclusion follows immediately.

\item \textbf{Introduce} Fix $B'$ and $d$.
Let us first verify condition~\ref{vps:second} of Definition~\ref{definition-partial-solution}.
We have that $P(B', d)$ is either $P(B', c)$ or $P(B' \setminus \{w\}, c) \cup \{w\}$.
For each vertex $v \neq w$ there exists a path from $v$ to a vertex in $B'$, as it existed in $P(B', c)$.
Additionally, if $w \in P(B', d)$, the condition holds trivially for $w$.

We now check whether condition~\ref{vps:first} of Definition~\ref{definition-partial-solution} holds.
As $P(B', c)$ is a valid partial solution, the condition can only be violated for out-edges of $w$.
But $w \in B_d$, so if the condition is not satisfied, then $B'$ is not a valid subset of $B_d$.
\item \textbf{Forget} In the second case, that is when $P(B', d) = P(B', c)$, the claim follows easily.
Let us now assume that we set $P(B', d) = P(B' \cup \{w\}, c)$.
We only need to verify condition~\ref{vps:second} of Definition~\ref{definition-partial-solution} 
(condition~\ref{vps:first} follows trivially).

We know that for each $v \in P(B', d)$ there exists a path from $v$ to $B' \cup \{w\}$ that is contained within $P(B', d)$.
Moreover, we have checked that there exists a path from $w$ to a vertex in $B'$.
This means that for each $v \in P(B', d)$ there exists a path from $v$ to $B'$.

It follows immediately that $TC(B', d)$ is computed correctly.
\end{itemize}
The desired result follows.
\qed


\end{document}